\documentclass[11pt,letterpaper]{article}

\usepackage[margin=1in]{geometry}

\usepackage{amsmath}
\usepackage{amssymb}
\usepackage{mathtools}
\usepackage{amsthm}

\newtheorem{proposition}{Proposition}

\newtheorem{corollary}{Corollary}
\theoremstyle{definition}

\usepackage{algorithm}
\usepackage{algorithmicx}
\usepackage{algpseudocode}

\usepackage{graphicx}
\graphicspath{{figs/}{./}}

\usepackage[T1]{fontenc}
\usepackage{lmodern}
\usepackage{microtype}
\usepackage{xcolor}
\usepackage[font=small,labelfont=bf]{caption}

\usepackage{authblk}

\usepackage[numbers,sort&compress]{natbib}
\usepackage[colorlinks=true,
            linkcolor=blue!50!black,
            citecolor=blue!50!black,
            urlcolor=blue!60!black,
            breaklinks=true]{hyperref}
\usepackage{orcidlink}

\title{\bfseries\Large Fast Voxelwise SNR Estimation \\
       for Iterative MRI Reconstructions}

\author[1,2]{Onat Dalmaz\,\orcidlink{0000-0001-7978-5311}}
\author[1,2]{Daniel Abraham\,\orcidlink{0009-0004-4299-3132}}
\author[1,2]{Alexander R. Toews\,\orcidlink{0009-0007-6682-1708}}
\author[2,4]{Akshay S. Chaudhari\,\orcidlink{0000-0002-3667-6796}}
\author[1,2]{Kawin Setsompop\,\orcidlink{0000-0003-0455-7634}}
\author[1,2,3]{Brian A. Hargreaves\,\orcidlink{0000-0003-0982-3508}}

\affil[1]{Department of Electrical Engineering, Stanford University, California, USA}
\affil[2]{Department of Radiology, Stanford University, California, USA}
\affil[3]{Department of Bioengineering, Stanford University, California, USA}
\affil[4]{Department of Biomedical Data Science, Stanford University, California, USA}
\affil[ ]{\vspace{0.4em}\normalfont\textit{Correspondence:} \texttt{onat@stanford.edu}}

\date{\today}

\begin{document}
\maketitle

\begin{abstract}
\noindent\textbf{Purpose.}
To develop a fast, general-purpose framework for voxelwise noise
characterization in MRI reconstructions, applicable to both linear and
nonlinear iterative reconstructions. The framework recovers the voxelwise
image-domain noise variance---the primary quantity from which
signal-to-noise ratio (SNR), $g$-factor, and related image-quality metrics
are derived---addressing both the computational intractability of
closed-form formulas beyond Cartesian sampling and the excessive runtime
of empirical Pseudo Multiple Replica (PMR) methods.

\smallskip
\noindent\textbf{Methods.}
A novel noise estimator---which we term PICO (Probing Image-space
COvariance)---is proposed that operates entirely in the image domain by
probing the image-domain noise covariance operator---or, for nonlinear
compressed-sensing reconstructions, the Jacobian of the converged
solution---with random probe images. Complex random-phase probes are
shown theoretically and empirically to minimize estimator variance
compared with Gaussian or real-valued alternatives. The method was
validated against analytical benchmarks and high-replica PMR references
using retrospective Cartesian knee data ($R=2$), prospective non-Cartesian
spiral brain phantom data ($R=2,3,4$), and compressed-sensing knee
reconstructions ($R=2$).

\smallskip
\noindent\textbf{Results.}
In Cartesian experiments, PICO accurately reproduced analytical SENSE
$g$-factor maps. In non-Cartesian spiral imaging ($R=2$), it achieved
$1\%$ estimation error in $64$~s compared with $462$~s for PMR
($\approx 7.2\times$ speedup), with the efficiency advantage persisting
at higher acceleration factors. For nonlinear compressed sensing, the
Jacobian-based estimator produced noise maps consistent with PMR while
converging faster ($52$~s vs.\ $95$~s; $\approx 1.8\times$ speedup).

\smallskip
\noindent\textbf{Conclusion.}
The proposed stochastic framework provides a computationally efficient
alternative to PMR for voxelwise noise and $g$-factor estimation,
enabling rapid SNR characterization for generalized MRI
reconstructions---including non-Cartesian, regularized, and nonlinear
pipelines---where closed-form noise analysis is otherwise unavailable.
By reusing existing reconstruction primitives, the method enables
voxelwise noise maps to be produced as a routine by-product of the
reconstruction pipeline itself.

\medskip
\noindent\textbf{Keywords:}
noise estimation; SNR; parallel imaging; CG-SENSE; iterative
reconstructions; compressed sensing; $g$-factor; non-Cartesian MRI.
\end{abstract}

\medskip
\hrule
\medskip
\section{Introduction}
\label{sec:intro}

Signal-to-noise ratio (SNR) is a fundamental measure 
of image quality in MRI \cite{macovski_noise,roemer1990nmr,noise_influence}. Every 
reconstructed image---whether from a fully sampled single-coil 
acquisition, a multi-coil parallel imaging scan, or a 
heavily accelerated compressed-sensing protocol---carries
noise whose spatial structure and local magnitude determines 
the reliability of each voxel for diagnosis, quantification, 
and downstream analysis \cite{SNR_measure,image_recon_SNR}. Voxelwise noise variance maps 
provide the most complete characterization of this 
variability: global summary metrics such as mean SNR or 
contrast-to-noise ratio can mask significant local noise 
amplification that impedes detection of subtle pathologies, 
even when aggregate quality appears acceptable 
\cite{Rubenstein1997,Lerski1993}. Accurate 
voxelwise noise maps are therefore essential for quality 
assurance, protocol optimization, comparison of 
reconstruction strategies, and principled selection of 
regularization parameters 
\cite{image_recon_SNR,Kiryu2023,Knoll2020, toniq}.


During the foundational era of parallel imaging, noise 
analysis was tightly integrated with reconstruction 
development. Pruessmann et~al.\ derived a closed-form 
geometry factor ($g$-factor) alongside the original SENSE 
formulation \cite{sense}, providing not only an analytical 
voxelwise noise amplification map for uniformly sampled 
Cartesian acquisitions but also the conceptual framework 
through which voxelwise SNR has been quantified in 
parallel imaging ever since. Analogous expressions were 
developed for SMASH \cite{smash_noise} and GRAPPA 
\cite{Breuer2009}, and Kellman and McVeigh 
showed that images could be reconstructed directly in SNR 
units using pre-scan noise calibration \cite{image_recon_SNR}. 
These tools were instrumental in algorithm design and 
clinical validation \cite{sense_clinic,Goerner2011}.  As reconstruction pipelines have grown more complex---moving from 
direct-inversion Cartesian methods to iterative solvers for 
non-Cartesian trajectories 
\cite{Pruessmann2001,generalized_sense,cg_sense_revisited}, 
compressed sensing with non-smooth regularizers 
\cite{cs_mri_review,shreyas_cs}, and deep-learning-based 
pipelines \cite{unrolled_1,Heckel2024}---the encoding 
operators (NUFFT, gridding, variable-density compensation) 
have become difficult to form or invert at scale 
\cite{Pruessmann2001,Pruessmann1999,fessler2005}, and the 
closed-form formulas that once made noise characterization 
routine are simply inapplicable. As a 
result, much of the reconstruction literature over the past 
two decades has either omitted noise analysis entirely or 
relegated it to post hoc characterization, decoupled from reconstruction 
development \cite{simulation,montin2021snr}.

In the absence of closed-form noise expressions for these modern reconstruction pipelines, the principal alternative has been the Pseudo Multiple 
Replica (PMR) method \cite{pmr}, which treats the 
reconstruction as an opaque black box: synthetic noise is 
injected into $k$-space, a full reconstruction is re-run for 
each replica, and voxelwise variance is estimated from 
the image ensemble. For computationally intensive 
reconstructions---such as non-Cartesian iterative methods, 
Tikhonov-regularized CG-SENSE, and compressed sensing 
solvers---this requires a complete iterative solve per 
replica \cite{dalmaz2024,akcakaya2014,simulation}. Beyond the computational cost of 
repeating the full solve for every replica 
\cite{dalmaz2024,akcakaya2014,simulation}, PMR is 
fundamentally limited by statistical inefficiency. Because it 
discards all structural knowledge of the reconstruction 
operator---in particular, known factorizations of the covariance matrix---PMR estimates 
variance from squared magnitudes of complex-Gaussian outputs, 
whose heavy-tailed (sub-exponential) distribution causes 
disproportionately slow convergence in high-noise-amplification 
regions where accurate characterization matters most 
(Appendix~\ref{app:tail_bounds}). The resulting wall-clock times have limited 
PMR adoption in practice and prevented its integration into 
real-time or inline workflows 
\cite{pmr,montin2021snr,daude2024inlineqc}. To reduce computational cost, Wiens et~al.\ proposed a generalized pseudo-replica method that replaces the 
$N$-replica ensemble with a \emph{spatial} average: a 
single noise-injected reconstruction is differenced 
against an unperturbed reference, and voxelwise variance 
is estimated from the standard deviation of the 
difference image within a sliding local 
window~\cite{wiens2011}. This delivers a sizable speedup 
over PMR but introduces three structural limitations. 
First, the estimator is fundamentally approximate: at 
every voxel $k$, it converges not to the true voxelwise 
variance $(\boldsymbol{\Sigma}_{\hat{\mathbf{x}}})_{kk}$ 
but to a local spatial average over the window, so the 
resulting noise map is inherently smoothed and biased 
wherever the g-factor varies within the averaging 
volume---precisely the high-g-factor boundaries where 
accurate characterization matters most. Second, the 
pointwise squared-deviation samples remain 
sub-exponential, so spatial averaging inherits the same 
heavy-tailed per-sample variance that governs PMR 
(Appendix~\ref{app:tail_bounds}); the method trades 
temporal replicas for spatial neighbors without escaping 
the underlying statistical inefficiency. Third, the 
subtraction step relies on a linearity argument that 
does not extend to non-smooth regularizers such as 
total variation or $\ell_1$, confining the method to 
linear reconstructions. The generalized pseudo-replica 
method therefore reduces the replica count at the cost 
of a structurally biased, resolution-limited estimator 
confined to linear pipelines, without exploiting any 
knowledge of the reconstruction operator.

Recent work has begun to address noise estimation in 
deep-learning-based reconstructions by locally linearizing 
the converged reconstruction map and using automatic 
differentiation to form Jacobian--vector products. This 
approach has been applied to unrolled deep networks 
\cite{dalmaz2025efficient} and to $k$-space interpolation 
(RAKI) networks \cite{dawood,homolya2025raki}, 
representing important progress toward tractable noise 
analysis for modern learned reconstructions. These methods, 
however, have three limitations that leave the broader 
problem unresolved. First, they are inherently approximate: 
the Jacobian captures only the first-order behavior of the 
reconstruction map at a single operating point, and the 
quality of the resulting variance estimate depends on how 
faithfully this linearization reflects the true nonlinear 
response. Second, they do not exploit the known structure of 
the encoding operator and its adjoint, treating the 
reconstruction as a generic differentiable map rather than 
as the solution of a specific physics-driven inverse problem. 
Third, and most importantly, they have not been developed 
for iterative linear reconstructions or for regularized 
compressed-sensing reconstructions---despite the fact that 
CG-SENSE and closely related linear iterative methods remain 
the most widely deployed reconstruction class in both 
clinical and research practice, and admit an \emph{exact}, 
fully analytical characterization of the image-domain noise 
covariance. A framework that 
directly exploits this known structure---recovering the 
exact covariance for linear reconstructions while extending 
naturally to nonlinear ones---has so far been absent.

In this work, we invert the standard paradigm of MRI 
noise characterization. Rather than injecting synthetic 
noise into $k$-space and averaging over many full 
reconstructions, we leave the acquired data untouched 
and probe the image-domain noise-propagation operator 
directly with stochastic image-space probes---recovering 
voxelwise variance from how the operator responds, 
rather than from how reconstructions fluctuate. We call 
this framework PICO (Probing Image-space COvariance). 
For linear reconstructions such as CG-SENSE with or 
without regularization---including the unaccelerated 
reference case that underlies essentially all 
quantitative SNR studies---PICO is mathematically exact: 
it recovers the diagonal of the noise covariance 
without any linearization, Jacobian approximation, or 
model simplification. The only source of error is 
finite-sample Monte Carlo variance, which decreases as 
$1/N$ and, unlike PMR's heavy-tailed sub-exponential 
behavior, concentrates with sub-Gaussian tails whose 
decay rate is \emph{independent} of the local noise 
amplitude (Appendix~\ref{app:tail_bounds}). Empirically, 
this translates into a substantial efficiency advantage 
over PMR that is retained across acceleration factors: 
as the reconstruction becomes more ill-conditioned, PMR 
requires disproportionately more additional replicas to 
maintain a given accuracy than PICO requires additional 
probes, so PICO scales more gracefully in the regimes 
where accurate noise characterization is hardest. 
Crucially, PICO requires no new computational 
infrastructure: its implicit covariance--vector 
products reuse the exact CG-SENSE primitives already 
present in any iterative reconstruction 
pipeline---forward and adjoint operators, regularized 
normal-equation solves, NUFFT, Toeplitz embeddings, and 
any physics corrections (off-resonance, density 
compensation) the reconstruction supports.

The same framework extends cleanly to nonlinear 
reconstructions (e.g., compressed sensing with 
total-variation or $\ell_1$ regularization) by applying the 
estimator to a first-order Jacobian linearization of the 
converged reconstruction map via automatic-differentiation 
Jacobian--vector products, unifying the linear exact and 
nonlinear approximate regimes under a single algorithm. 
Crucially, the choice of probe distribution in both regimes 
is not a heuristic. Building on the classical stochastic 
diagonal estimators of Hutchinson and Bekas et~al.\ 
\cite{hutchinson,bekas}, we generalize the variance analysis 
from real symmetric to complex Hermitian positive-semidefinite 
operators and prove that \emph{random-phase 
probes}---unit-magnitude complex vectors with uniformly 
distributed phase---attain the minimum achievable per-sample 
estimator variance among all unit-variance probe 
distributions, strictly outperforming both standard complex 
Gaussian and real-valued Rademacher alternatives 
(Appendix~\ref{app:variance-derivation}). To our knowledge, 
this kurtosis-optimal probe structure has not previously been 
exploited for MRI noise characterization. Taken together, 
these contributions re-establish voxelwise SNR 
characterization as a lightweight, physics-grounded 
by-product of the reconstruction pipeline itself---available across a broad range of reconstruction types including direct-inversion Cartesian SENSE, iterative CG-SENSE on arbitrary trajectories, Tikhonov-regularized linear solvers, and nonlinear compressed-sensing reconstructions

To demonstrate PICO, we apply it to three representative MRI reconstruction settings. On retrospectively undersampled Cartesian knee data at $R=2$, 
the estimator is compared against the closed-form analytical 
SENSE  noise reference, providing a ground-truth benchmark where 
one exists. On prospective non-Cartesian spiral brain phantom 
data at $R \in \{2,3,4\}$, where no closed-form reference is 
available, we use a high-replica PMR estimate whose convergence is 
independently certified (Appendix~\ref{app:pmr_convergence}); 
benchmarking against PMR's own converged output is 
deliberately conservative, since any residual PMR-specific 
bias is absorbed into the reference rather than penalized. 
On compressed-sensing reconstructions of the fastMRI knee 
dataset at $R=2$, a total-variation--regularized FISTA 
pipeline exercises the Jacobian extension in a representative 
nonlinear setting. An ablation study empirically confirms the 
kurtosis-based probe-optimality theory, and a robustness 
study demonstrates that the Jacobian approximation remains 
accurate across the full range of SNRs encountered in 
standard acquisitions, with visible departures from PMR 
appearing only at stress-test noise levels where the 
underlying reconstruction itself becomes strongly nonlinear.

\section{Theory}
\label{sec:theory}
\subsection{MRI Reconstruction and Noise}

\subsubsection{Acquisition Model}
Let $\mathbf{x}\in\mathbb{C}^{N_x}$ be the image and $\mathbf{b}\in\mathbb{C}^{N_k}$ the stacked multi–coil k–space data. The generalized linear encoding operator $\widetilde{\mathbf{A}}\in\mathbb{C}^{N_k\times N_x}$ includes coil sensitivities, Fourier/NUFFT, density compensation, and sampling trajectory:
\begin{equation}
\mathbf{b}=\widetilde{\mathbf{A}}\mathbf{x}+\mathbf{n}.
\label{eq:forward_model_original_theory}
\end{equation}
\subsubsection{Multi-Coil Noise and Pre-Whitening}
Acquisition noise is modeled as circular complex Gaussian with multi-channel noise covariance matrix
$\boldsymbol{\Sigma}_{\mathbf{n}}$:
\begin{equation}
\mathbf{n}\sim\mathcal{CN}\!\left(\mathbf{0},\,\boldsymbol{\Sigma}_{\mathbf{n}}\right),\qquad \boldsymbol{\Sigma}_{\mathbf{n}}\succeq \mathbf{0}.
\label{eq:input_noise_theory}
\end{equation}
Let $\mathbf{L}\mathbf{L}^{\mathrm{H}}=\boldsymbol{\Sigma}_{\mathbf{n}}$ (e.g., Cholesky factorization). Left-multiplying \eqref{eq:forward_model_original_theory} by $\mathbf{L}^{-1}$ yields the pre-whitened system
\begin{equation}
\mathbf{b}_w=\mathbf{A}\mathbf{x}+\mathbf{n}_w,\quad \mathbf{A}=\mathbf{L}^{-1}\widetilde{\mathbf{A}},\quad \mathbf{n}_w=\mathbf{L}^{-1}\mathbf{n}\sim\mathcal{CN}(\mathbf{0},\mathbf{I}).
\label{eq:prewhitened_system_theory}
\end{equation}
All subsequent derivations and experiments operate on the 
pre-whitened system of Eq.~\eqref{eq:prewhitened_system_theory}. 
Pre-whitening is a standard first step in multi-coil reconstruction pipelines~\cite{hansen2015image,pmr}: it folds the inter-coil noise correlations into the encoding operator through $\mathbf{L}^{-1}$, leaving residual noise that is spatially white and unit-variance across channels. As a result, the image-domain noise covariance of any linear reconstruction depends only on the reconstruction operator $\mathbf{R}$ acting on the pre-whitened data (Eq.~\eqref{eq:cov_general_theory}), and the proposed estimator probes $\boldsymbol{\Sigma}_{\hat{\mathbf{x}}} = \mathbf{R}\mathbf{R}^{\mathrm{H}}$ without requiring explicit knowledge of the original channel noise correlations.

\subsubsection{Reconstruction model and operators: from least squares to CG-SENSE and beyond}

In practice, pre-whitened multi-coil MRI data are modeled as \(\mathbf{b}_w=\mathbf{A}\mathbf{x}+\mathbf{n}_w\) with \(\mathbf{n}_w\sim\mathcal{CN}(\mathbf{0},\mathbf{I})\). A standard workhorse in clinical and research pipelines is a quadratic (least-squares) objective with optional \(\ell_2\) (Tikhonov) stabilization:
\begin{equation}
\hat{\mathbf{x}}_\lambda \in \arg\min_{\mathbf{x}}\; \tfrac{1}{2}\|\mathbf{A}\mathbf{x}-\mathbf{b}_w\|_2^2 + \tfrac{\lambda}{2}\|\mathbf{x}\|_2^2, \qquad \lambda \ge 0.
\label{eq:tik_objective}
\end{equation}
This objective is convex and \emph{quadratic}; its minimizer depends \emph{linearly} on the data. Equivalently, there exists a linear reconstruction operator \(\mathbf{R}_\lambda\) such that
\begin{equation}
\hat{\mathbf{x}}_\lambda = \mathbf{R}_\lambda\,\mathbf{b}_w,
\qquad
\mathbf{R}_\lambda = (\mathbf{A}^{\mathrm{H}}\mathbf{A}+\lambda\mathbf{I})^{-1}\mathbf{A}^{\mathrm{H}}.
\label{eq:R_lambda}
\end{equation}
Forming \((\mathbf{A}^{\mathrm{H}}\mathbf{A}+\lambda\mathbf{I})^{-1}\) explicitly is unnecessary in large-scale MRI \cite{yeung2025algebraic}. Instead, one solves the normal equations
\begin{equation}
(\mathbf{A}^{\mathrm{H}}\mathbf{A}+\lambda\mathbf{I})\,\mathbf{x}=\mathbf{A}^{\mathrm{H}}\mathbf{b}_w,
\label{eq:normal_equations}
\end{equation}
with conjugate gradients (CG). This is precisely the CG-SENSE procedure widely used in routine workflows: it exploits fast applications of \(\mathbf{A}\) and \(\mathbf{A}^{\mathrm{H}}\) (FFT/NUFFT + sensitivity maps) and avoids matrix factorizations \cite{cg_sense_revisited,Pruessmann2001}. 

\subsubsection{Noise propagation and covariance}

With pre-whitened noise \(\mathbf{n}_w\sim\mathcal{CN}(\mathbf{0},\mathbf{I})\) and the linear reconstruction \(\hat{\mathbf{x}}_\lambda=\mathbf{R}_\lambda\,\mathbf{b}_w\) from Eq.~\eqref{eq:R_lambda}, the image-domain noise \emph{covariance} matrix (not correlation matrix---the diagonal entries give voxelwise noise variances, not unity) follows directly as
\begin{equation}
\boldsymbol{\Sigma}_{\hat{\mathbf{x}}}
=\mathrm{Cov}[\hat{\mathbf{x}}_\lambda]
=\mathbf{R}_\lambda\,\mathbf{R}_\lambda^{\mathrm{H}}.
\label{eq:cov_general_theory}
\end{equation}
Substituting \(\mathbf{R}_\lambda\) yields the familiar expressions
\begin{equation}
\boldsymbol{\Sigma}_{\hat{\mathbf{x}}}
=(\mathbf{A}^{\mathrm{H}}\mathbf{A}+\lambda\mathbf{I})^{-1}\,\mathbf{A}^{\mathrm{H}}\mathbf{A}\,(\mathbf{A}^{\mathrm{H}}\mathbf{A}+\lambda\mathbf{I})^{-1},
\qquad \lambda\ge 0,
\label{eq:cov_tikh_theory}
\end{equation}
and, in the unregularized case \((\lambda=0)\) with invertible \(\mathbf{A}^{\mathrm{H}}\mathbf{A}\),
\begin{equation}
\boldsymbol{\Sigma}_{\hat{\mathbf{x}}}=(\mathbf{A}^{\mathrm{H}}\mathbf{A})^{-1}.
\label{eq:cov_ls_theory}
\end{equation}
The voxelwise noise variance map is the diagonal \cite{sense,Breuer2009}
\begin{equation}
\boldsymbol{\sigma}^2_{\hat{\mathbf{x}}}=\mathrm{diag}\!\big(\boldsymbol{\Sigma}_{\hat{\mathbf{x}}}\big).
\label{eq:varmap_theory}
\end{equation}
Relative noise amplification (the \(g\)-factor) is then defined as:
\begin{equation}
g(i)=\frac{\sigma_{\hat{\mathbf{x}},\mathrm{acc},\lambda}(i)}
  {\sqrt{R}\;\sigma_{\hat{\mathbf{x}},\mathrm{ref}}(i)},
\label{eq:g_factor_def}
\end{equation}
where $\sigma_{\hat{\mathbf{x}},\mathrm{acc},\lambda}(i)$ is the 
noise standard deviation of the $i$-th voxel in the accelerated, 
regularized reconstruction, and 
$\sigma_{\hat{\mathbf{x}},\mathrm{ref}}(i)$ is the corresponding 
quantity from a fully sampled, reference reconstruction using the same coil configuration.


\subsection{Direct Noise Estimation via Probing the Image-Space Covariance (PICO)}
\label{subsec:our_method}
\begin{figure*}[t]
\centering
\includegraphics[width=\textwidth]{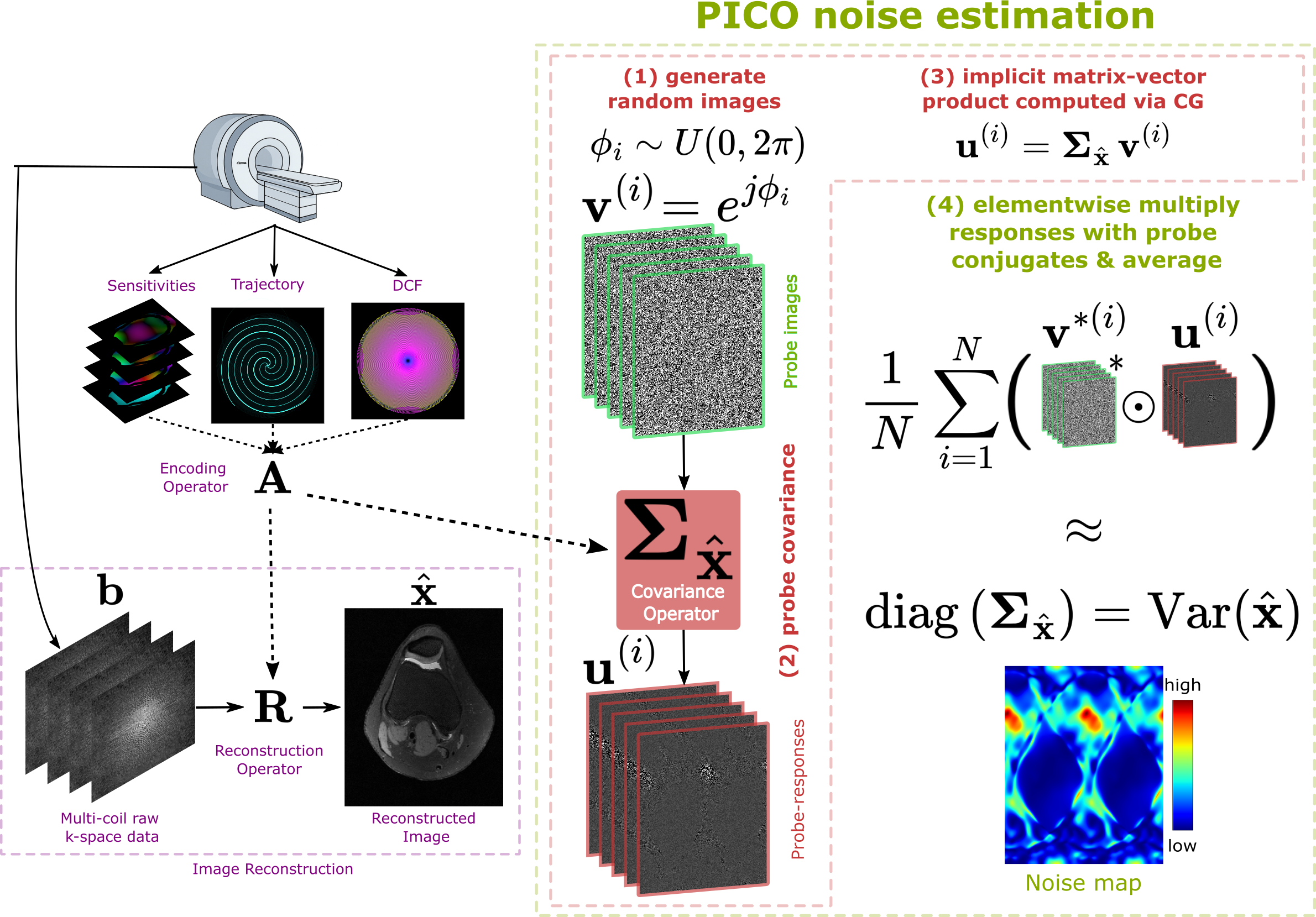}
\caption{Overview of PICO for fast voxelwise noise 
characterization. Left: the MRI acquisition is modeled 
by an encoding operator $\mathbf{A}$ (coil sensitivities, 
sampling trajectory, density compensation); the 
reconstruction operator $\mathbf{R}$ maps multi-coil 
$k$-space data to the image estimate~$\hat{\mathbf{x}}$. 
Right: PICO estimates the voxelwise noise variance map 
$\mathrm{diag}(\boldsymbol{\Sigma}_{\hat{\mathbf{x}}})$ 
in four steps. 
(1)~Generate unit-magnitude random-phase probe images 
$\mathbf{v}^{(i)}$, $i=1,\dots,N$. 
(2)~Pass each probe through the implicit covariance 
operator 
$\boldsymbol{\Sigma}_{\hat{\mathbf{x}}} 
= \mathbf{R}\mathbf{R}^{\mathrm{H}}$, 
without forming the matrix explicitly. 
(3)~Evaluate the covariance--vector product 
$\mathbf{u}^{(i)} 
= \boldsymbol{\Sigma}_{\hat{\mathbf{x}}}\,\mathbf{v}^{(i)}$ 
via CG on the normal equations, reusing the same 
$\mathbf{A}$/$\mathbf{A}^{\mathrm{H}}$ primitives as 
CG-SENSE reconstruction. 
(4)~Elementwise multiply each response by its probe 
conjugate and average: 
$\hat{\boldsymbol{\sigma}}^{2}_{\hat{\mathbf{x}}} 
= \frac{1}{N}\sum_{i=1}^{N} 
\mathbf{v}^{(i)*}\!\odot\mathbf{u}^{(i)}$.}
\label{fig:noise_module}
\end{figure*}

To overcome the limitations of PMR, we propose a direct and computationally efficient framework 
for estimating the voxelwise noise variance map 
$\boldsymbol{\sigma}^2_{\hat{\mathbf{x}}}=\mathrm{diag}(\boldsymbol{\Sigma}_{\hat{\mathbf{x}}})$ 
defined in \eqref{eq:varmap_theory}, without performing multiple reconstructions. 
PICO operates entirely in the image domain and is summarized in Fig.~\ref{fig:noise_module}.  
It builds upon stochastic trace and diagonal estimators from numerical linear algebra 
\cite{hutchinson, bekas}, generalized here to complex Hermitian covariances with
$\boldsymbol{\Sigma}_{\hat{\mathbf{x}}}=\mathbf{R}\mathbf{R}^{\mathrm{H}}$.

\subsubsection{An unbiased noise variance estimator}
\label{sec:estimator}
The goal is to estimate $\mathrm{diag}(\boldsymbol{\Sigma}_{\hat{\mathbf{x}}})$
without explicitly forming the full covariance matrix. 
Let $\{\mathbf{v}^{(i)}\}_{i=1}^{N_{\text{probes}}}$ be independent probe images drawn from a distribution
with zero mean and identity covariance (examples below). 
For each probe, define the implicit covariance–vector product
\begin{equation}
\mathbf{u}^{(i)} \;\coloneqq\; \boldsymbol{\Sigma}_{\hat{\mathbf{x}}}\,\mathbf{v}^{(i)},
\label{eq:def-u}
\end{equation}
which can be evaluated using the same primitives as CG-SENSE (applications of $\mathbf{A}$, $\mathbf{A}^{\mathrm{H}}$, and regularized normal-equation solves) without forming $\boldsymbol{\Sigma}_{\hat{\mathbf{x}}}$.

The diagonal of $\boldsymbol{\Sigma}_{\hat{\mathbf{x}}}$ is then estimated via elementwise products of probe conjugates and responses:
\begin{equation}
\widehat{\mathrm{diag}}\!\left(\boldsymbol{\Sigma}_{\hat{\mathbf{x}}}\right)
\;\;=\;\;
\frac{1}{N_{\text{probes}}}\sum_{i=1}^{N_{\text{probes}}}
\mathbf{v}^{(i)*}\odot \mathbf{u}^{(i)}.
\label{eq:diagonal_estimator}
\end{equation}
By identification with \eqref{eq:varmap_theory}, we write the \emph{estimated noise variance map} as
\begin{equation}
\hat{\boldsymbol{\sigma}}^{2}_{\hat{\mathbf{x}}}
\;\coloneqq\;
\widehat{\mathrm{diag}}\!\left(\boldsymbol{\Sigma}_{\hat{\mathbf{x}}}\right)
\;\;=\;\;
\frac{1}{N_{\text{probes}}}\sum_{i=1}^{N_{\text{probes}}}
\mathbf{v}^{(i)*}\odot \Bigl(\boldsymbol{\Sigma}_{\hat{\mathbf{x}}}\,\mathbf{v}^{(i)}\Bigr).
\label{eq:varmap_estimator}
\end{equation}

\subsubsection{Unbiasedness}
Assume the probe distribution satisfies
\begin{equation}
\mathbb{E}\!\left[\mathbf{v}^{(i)}\right]=\mathbf{0},
\qquad
\mathbb{E}\!\left[\mathbf{v}^{(i)}(\mathbf{v}^{(i)})^{\mathrm{H}}\right]=\mathbf{I}.
\label{eq:probe_conditions}
\end{equation}
Let $\mathbf{y}^{(i)}=\mathbf{v}^{(i)*}\odot \mathbf{u}^{(i)}$ with $\mathbf{u}^{(i)}=\boldsymbol{\Sigma}_{\hat{\mathbf{x}}}\mathbf{v}^{(i)}$.
For the $k$-th voxel,
\[
y^{(i)}_k = v_k^{(i)*}\!\sum_{j}(\boldsymbol{\Sigma}_{\hat{\mathbf{x}}})_{kj} v^{(i)}_j
\]
\begin{equation}
\mathbb{E}\!\left[y^{(i)}_k\right]
= \sum_{j}(\boldsymbol{\Sigma}_{\hat{\mathbf{x}}})_{kj}\,\mathbb{E}\!\left[v_k^{(i)*} v^{(i)}_j\right]
= (\boldsymbol{\Sigma}_{\hat{\mathbf{x}}})_{kk}.
\label{eq:unbiasedness}
\end{equation}

Thus $\mathbb{E}\!\left[\mathbf{y}^{(i)}\right]=\mathrm{diag}(\boldsymbol{\Sigma}_{\hat{\mathbf{x}}})$, and averaging
in \eqref{eq:varmap_estimator} yields an unbiased estimator of the noise variance map:
\begin{equation}
\mathbb{E}\!\left[\hat{\boldsymbol{\sigma}}^{2}_{\hat{\mathbf{x}}}\right]
\;=\;
\mathrm{diag}\!\left(\boldsymbol{\Sigma}_{\hat{\mathbf{x}}}\right)
\;=\;
\boldsymbol{\sigma}^{2}_{\hat{\mathbf{x}}}.
\label{eq:unbiasedness_result}
\end{equation}
The estimator variance decreases as $1/N_{\text{probes}}$. 
For a detailed analysis of the theoretical properties of the estimator, 
see Appendix~\ref{app:variance-derivation} and \ref{app:tail_bounds}.

\subsubsection{Implicit matrix–vector product via CG}
The vector $\mathbf{u}^{(i)}$ in \eqref{eq:def-u} is not computed by explicitly forming 
$\mathbf{\Sigma}_{\hat{\mathbf{x}}}$, which would be intractable for realistic image sizes. 
Instead, it is obtained implicitly using operator access to the reconstruction operator 
$\mathbf{R}$ and its adjoint $\mathbf{R}^\mathrm{H}$, both of which are naturally 
available within iterative solvers such as CG-SENSE. 
Since $\mathbf{R}$ itself corresponds to a linear mapping, 
the product $\mathbf{u}^{(i)} = \mathbf{\Sigma}_{\hat{\mathbf{x}}}\mathbf{v}^{(i)} 
= \mathbf{R}(\mathbf{R}^\mathrm{H}\mathbf{v}^{(i)})$ can be efficiently evaluated 
through existing forward and adjoint operations without ever forming $\mathbf{\Sigma}_{\hat{\mathbf{x}}}$ 
explicitly. In practice, this implicit computation is implemented using the same low-level routines that perform the CG-SENSE reconstruction, requiring only matrix–vector products with 
$\mathbf{A}$ and $\mathbf{A}^\mathrm{H}$ and regularized normal-equation solves. A complete algoritmhic pseudocode description of PICO is provided in Appendix~\ref{app:PICO_algorithm}.

\subsubsection{Choice of probing vectors}
\label{sec:probes}
Several families of random vectors satisfy the probe conditions in \eqref{eq:probe_conditions}. The single-sample variance of the diagonal estimator at each voxel decomposes into two contributions (Appendix~\ref{app:variance-derivation}): one that scales with the kurtosis $\kappa = \mathbb{E}[|v_k|^4]$ of the probe distribution and multiplies the squared diagonal entries $|(\boldsymbol{\Sigma}_{\hat{\mathbf{x}}})_{kk}|^2$, and one that depends only on the off-diagonal energy $\sum_{\ell\neq k}|(\boldsymbol{\Sigma}_{\hat{\mathbf{x}}})_{k\ell}|^2$ and is invariant across probe distributions. Minimizing the per-sample variance therefore reduces to minimizing the kurtosis $\kappa$ subject to the unit-variance constraint $\mathbb{E}[|v_k|^2] = 1$.

A natural baseline is the standard complex Gaussian probe,
\begin{equation}
\mathbf{v}^{(i)} \sim \mathcal{CN}(\mathbf{0}, \mathbf{I}),
\label{eq:gaussian_probe}
\end{equation}
which satisfies $\kappa = 2$. Prior work on real-valued diagonal estimation showed that Rademacher probes~\cite{hutchinson,bekas},
\begin{equation}
v^{(i)}_k \in \{-1, +1\}, \quad \mathbb{P}(v^{(i)}_k = +1) = \mathbb{P}(v^{(i)}_k = -1) = \tfrac{1}{2},
\label{eq:real_rademacher}
\end{equation}
attain $\kappa = 1$, halving the diagonal contribution to the per-sample variance relative to Gaussian probes. Since $\kappa = 1$ is the smallest kurtosis achievable by any unit-variance distribution, real Rademacher probes are already kurtosis-optimal for real-valued covariance operators.

For the complex Hermitian covariances $\boldsymbol{\Sigma}_{\hat{\mathbf{x}}}$ that arise in MRI reconstruction, however, real-valued probes are information-limited: a real probe encodes only one real degree of freedom per voxel, leaving the imaginary components of the off-diagonal entries $(\boldsymbol{\Sigma}_{\hat{\mathbf{x}}})_{k\ell}$ invisible to a single sample, so that two independent real probes are required to capture the same information as a single complex probe (Appendix~\ref{app:variance-derivation}). We therefore adopt the complex analogue of the Rademacher distribution: unit-magnitude probes with uniformly distributed phase,
\begin{equation}
v_k^{(i)} = e^{\mathrm{j}\theta_k^{(i)}}, \qquad \theta_k^{(i)} \sim \mathrm{Uniform}[0, 2\pi].
\label{eq:complex_rademacher}
\end{equation}
The random-phase distribution satisfies $\mathbb{E}[\mathbf{v}^{(i)}] = \mathbf{0}$ and $\mathbb{E}[\mathbf{v}^{(i)}(\mathbf{v}^{(i)})^{\mathrm{H}}] = \mathbf{I}$ exactly---these conditions are inherent to the construction rather than asymptotic---and attains the same minimum kurtosis $\kappa = 1$ as real Rademacher while encoding two real degrees of freedom per voxel. Random-phase probes therefore strictly minimize the single-sample variance among unit-variance distributions on complex Hermitian operators (Appendix~\ref{app:variance-derivation}). We empirically validate this design choice in an ablation study (Sec.~\ref{subsec:noncart_cgsense_noise_amp}) comparing random-phase, real Rademacher, and complex Gaussian probes at matched probe counts.

\subsubsection{Extension to Compressed-Sensing reconstructions}
\label{subsec:cs_extension}
When the reconstruction problem in Eq.~\eqref{eq:tik_objective} is 
non-smooth (e.g., TV or $\ell_1$), the reconstruction map 
$f:\mathbf{b}_w\mapsto\hat{\mathbf{x}}$ becomes nonlinear \cite{Lustig2008} and the exact covariance 
$\mathrm{Cov}[\hat{\mathbf{x}}]$ has no tractable closed form. We employ a local linearization, following the Jacobian-based 
framework introduced in recent work 
\cite{dalmaz2025efficient,dawood}. Let 
$\mathbf{k}_0 = \mathbf{A}\mathbf{x}$ denote the noise-free 
k-space data. If the converged solution $\hat{\mathbf{x}}$ is 
differentiable with respect to $\mathbf{b}_w$ at $\mathbf{k}_0$ 
(which holds almost everywhere for piecewise-linear proximal 
operators such as soft-thresholding 
\cite{dalmaz2025efficient,dawood}), then
\[
f(\mathbf{k}_0 + \mathbf{n}_w)\;\approx\; f(\mathbf{k}_0)
  \;+\; \mathbf{J}_f(\mathbf{k}_0)\,\mathbf{n}_w,
\]
and the image-domain noise covariance is approximated as
\[
\boldsymbol{\Sigma}_{\hat{\mathbf{x}}} \;\approx\; 
  \mathbf{J}_f(\mathbf{k}_0)\,
  \mathbf{J}_f(\mathbf{k}_0)^{\mathrm{H}}.
\]
This approximation is exact to first order in the noise amplitude 
and becomes increasingly accurate as the input SNR increases. 
Consequently, the stochastic diagonal estimator applied to 
$\mathbf{J}_f\mathbf{J}_f^{\mathrm{H}}$ yields a 
\emph{locally unbiased} estimate of the noise variance map. Modern automatic differentiation 
frameworks provide efficient access to Jacobian--vector products, 
enabling direct evaluation of 
$\mathbf{u} = \mathbf{J}_f(\mathbf{k}_0)\,\mathbf{v}$ for random 
probes $\mathbf{v}$ \cite{dalmaz2025efficient,dawood,homolya2025raki}.

Therefore, even in the nonlinear CS setting, the stochastic diagonal 
estimator remains directly applicable: averaging 
$|\mathbf{u}|^2$ across probes yields unbiased noise variance 
estimates, and taking the elementwise square root produces voxelwise 
noise standard deviation maps.

\paragraph{Relationship to the linear case.}
The Jacobian formulation above subsumes the linear covariance-probing construction of Sec.~\ref{sec:estimator} as a special case. For a linear reconstruction $\hat{\mathbf{x}} = \mathbf{R}\mathbf{b}_w$, the Jacobian is constant and equal to the reconstruction operator itself, $\mathbf{J}_f(\mathbf{k}_0) = \mathbf{R}$, so that
\begin{equation}
\mathbf{J}_f\mathbf{J}_f^{\mathrm{H}} = \mathbf{R}\mathbf{R}^{\mathrm{H}} = \boldsymbol{\Sigma}_{\hat{\mathbf{x}}},
\label{eq:linear_equivalence}
\end{equation}
and the Jacobian-based estimator coincides exactly with direct probing of $\boldsymbol{\Sigma}_{\hat{\mathbf{x}}}$. The proposed framework is therefore a single estimator applied to the same class of Hermitian positive-semidefinite operators in both regimes---exact for linear reconstructions, first-order for nonlinear ones---rather than two distinct methods for two distinct problems. In practice, we use the direct covariance formulation $\boldsymbol{\Sigma}_{\hat{\mathbf{x}}}\mathbf{v} = \mathbf{R}\mathbf{R}^{\mathrm{H}}\mathbf{v}$ for linear reconstructions because it reuses existing CG-SENSE primitives without invoking automatic differentiation, but the two constructions yield mathematically identical results whenever the reconstruction map is linear.

\section{Methods}


\subsection{Datasets}
\label{sec:methods}

\paragraph{Stanford knee.}
Eight-channel proton-density-weighted 3D fast--spin-echo knee scans were obtained from \texttt{mridata.org} \cite{epperson2013creation}: 14 subjects, acquisition matrix $k_x \times k_y \times k_z = 320 \times 320 \times 256$. All datasets were fully sampled at acquisition. A 1D inverse Fourier transform along readout produced hybrid k-space slices ($x \times k_y \times k_z$). Coil sensitivities were estimated with JSENSE (kernel width 8, $20 \times 20$ auto-calibration region) \cite{jsense}. Quantitative metrics were computed per slice and averaged within each subject; inter-subject statistics are reported over all 14 subjects.

\paragraph{Physical brain phantom.}
A physical brain phantom was imaged using a gradient-echo spiral acquisition at 1\,mm isotropic resolution (22\,cm FOV, 60 interleaves) on a GE 3T Ultra High Performance scanner with 13 receive channels. Coil sensitivities were estimated by ESPIRiT from the fully sampled central region \cite{Uecker2014}.

\paragraph{fastMRI knee.}
Fifteen fully sampled subjects from the fastMRI dataset \cite{fastmri} were used: Cartesian fast spin-echo, 15-channel coronal proton-density fat-suppressed images, matrix $320 \times 320 \times 25$. Coil sensitivities were estimated by ESPIRiT using 24 central auto-calibration lines \cite{Uecker2014}.

\subsection{Reconstruction experiments}

All experiments assess the accuracy of voxelwise noise variance estimation, with $g$-factor maps reported as derived ratios where a fully sampled reference is available. For the Cartesian experiment, a closed-form analytical noise reference serves as ground truth; for the non-Cartesian and CS experiments, where no closed-form reference exists, high-replica PMR maps serve as surrogate references with convergence independently validated (Appendix~\ref{app:pmr_convergence}). Reconstructed magnitude images are shown alongside noise and $g$-factor maps for anatomical context. For brevity, $N$ denotes the number of probes (PICO) or replicas (PMR) throughout.

\subsubsection{Cartesian CG-SENSE}

\paragraph{Experimental setup.}
The Stanford knee dataset was used to evaluate both methods against an analytical ground truth. Fully sampled multi-coil k-space was retrospectively undersampled along the phase-encoding direction at $R=2$ (uniform pattern, $R_x=1$, $R_y=2$). Undersampled data were reconstructed by CG-SENSE without regularization:
\[
\hat{\mathbf{x}} = \arg\min_{\mathbf{x}} \,\|\mathbf{A}_{\mathrm{acc}} \mathbf{x} - \mathbf{b}_{\mathrm{acc}}\|_2^2,
\]
with eigenvalue normalization (15-step power method, 1.01$\times$ safety margin) and 25 CG iterations. The closed-form analytical SENSE noise-variance map served as ground-truth reference \cite{sense}.

\paragraph{Evaluation.}
Both PICO and PMR were evaluated at probe/replica counts $N \in \{10, 20, 40, 80, 200\}$. Quantitative accuracy was assessed via voxelwise NRMSE at $N=200$, computed per slice and averaged within each subject. Statistical significance of the NRMSE difference between PICO and PMR was tested using a paired Wilcoxon signed-rank test across the 14 subjects' per-subject mean NRMSE values (two-sided $p < 0.01$).

\subsubsection{Non-Cartesian CG-SENSE}

\paragraph{Experimental setup.}
non-Cartesian multi-coil k-space data from brain phantom was undersampeld via retrospective shot subsampling: every $R$-th interleave was retained, yielding acceleration factors $R \in \{2, 3, 4\}$. Reconstruction solved the Tikhonov-regularized problem \cite{Pruessmann2001}:
\[
\hat{\mathbf{x}}=\arg\min_{\mathbf{x}} \|\mathbf{A}_{\mathrm{acc}}\mathbf{x}-\mathbf{b}_{\mathrm{acc}}\|_2^2+\lambda \|\mathbf{x}\|_2^2,
\]
with $\lambda=0.1$, Kaiser--Bessel NUFFT (width 4), Toeplitz normal-operator approximation, eigenvalue normalization (15-step power method, 1.01$\times$), and 100 CG iterations (tolerance $10^{-2}$).

\paragraph{Estimators and reference.}
PMR used i.i.d.\ complex Gaussian noise ($\sigma_k = 10^{-2}$); PICO used random-phase probes. Because no closed-form $g$-factor exists for non-Cartesian trajectories, a high-fidelity PMR estimate at $N_{\mathrm{ref}} = 30{,}000$ replicas served as a reference, with convergence independently validated (Appendix~\ref{app:pmr_convergence}). Benchmarking against PMR's own converged output, if anything, favors PMR: any residual PMR-specific bias is absorbed into the reference rather than penalized.

\paragraph{Evaluation.}
Both methods were evaluated from $N=10$ to $N=10,000$ in steps of 10. NRMSE was computed against the PMR reference at each $N$ to generate accuracy--runtime trade-off curves. We report the operating point where each method first reached $1\%$ NRMSE. An ablation study compared convergence of three probe distributions (random phase, real Rademacher, standard complex Gaussian) over the same $N$ range.

\subsubsection{Compressed-Sensing reconstruction}

\paragraph{Experimental setup.}
Ten subjects' images from the fastMRI knee dataset were retrospectively undersampled using a Poisson-disc pattern ($R=2$, $24 \times 24$ calibration region). Reconstruction used FISTA \cite{Beck2009} with total-variation (TV) regularization \cite{knoll_tv}:
\[
\hat{\mathbf{x}} = \arg\min_{\mathbf{x}} \tfrac{1}{2}\|\mathbf{A} \mathbf{x} - \mathbf{b}\|_2^2 + \lambda_{\mathrm{TV}} \|\nabla \mathbf{x}\|_{1},
\]
with $\lambda_{\mathrm{TV}} = 10^{-2}$ and 100 FISTA iterations.

\paragraph{Variance estimation.}
PICO was applied to the Jacobian of the reconstruction map at the converged solution, evaluating $\mathbf{u} = \mathbf{J}_f(\mathbf{k}_0)\mathbf{v}$ via automatic differentiation~\cite{dalmaz2025efficient}. PMR used $\sigma_k = 10^{-7}$.

\paragraph{Evaluation.}
Because no closed-form noise reference exists for nonlinear reconstructions, convergence was analyzed using method-specific gold-standard references: separate high-sample-count runs at $N = 50{,}000$ for PICO and for PMR, each used to benchmark its own lower-$N$ estimates (Appendix~\ref{app:pmr_convergence}). NRMSE was computed on estimated noise standard deviation maps at each intermediate~$N$; runtime efficiency was quantified by comparing the time for each method to reach converged accuracy. Statistical significance of the NRMSE difference between PICO and PMR was tested using a paired Wilcoxon signed-rank test across the 10 subjects' per-subject mean NRMSE values.

\paragraph{Robustness to input noise level.}
To assess validity of the local linearization, the input noise standard deviation was swept over scale factors $\{1, 5, 10, 50, 100, 200\} \times \sigma_0$ ($\sigma_0 = 10^{-7}$), spanning SNR from approximately $45$ to $-1$\,dB. For each level, variance maps were computed by both methods at $N = 3000$. Normalized output maps ($\boldsymbol{\sigma}_{\hat{\mathbf{x}}}/\sigma_k$) were compared to test whether PICO remains concordant with PMR across a broad SNR range.

\subsection{Implementation details}

\paragraph{Software and operators.}
All methods were implemented in Python/PyTorch with complex64 arithmetic \cite{pytorch}. Non-Cartesian encoding and reconstruction used SigPy \cite{sigpy} (Kaiser--Bessel width 4) with Toeplitz embedding for the normal operator \cite{fessler2005}. Cartesian operators used the same NUFFT framework with grid-aligned trajectories.

\paragraph{Solvers.}
Linear reconstructions used CG on the normal equations; nonlinear CS used FISTA with standard momentum \cite{Beck2009}. Spectral normalization ($\lambda_{\max}$ via 15-step power method, $1.01\times$ margin) was applied to all forward operators.

\paragraph{Hardware and code availability.}
All experiments ran on an NVIDIA RTX 8000 GPU. Reported timings are wall-clock measurements averaged over multiple runs. Code is available at \url{https://github.com/onat-dalmaz/fast_mri_gfactor}.

\section{Results}
\label{sec:results}

\subsection{Cartesian CG-SENSE}

\begin{figure*}[t]
    \centering
    \includegraphics[width=\textwidth]{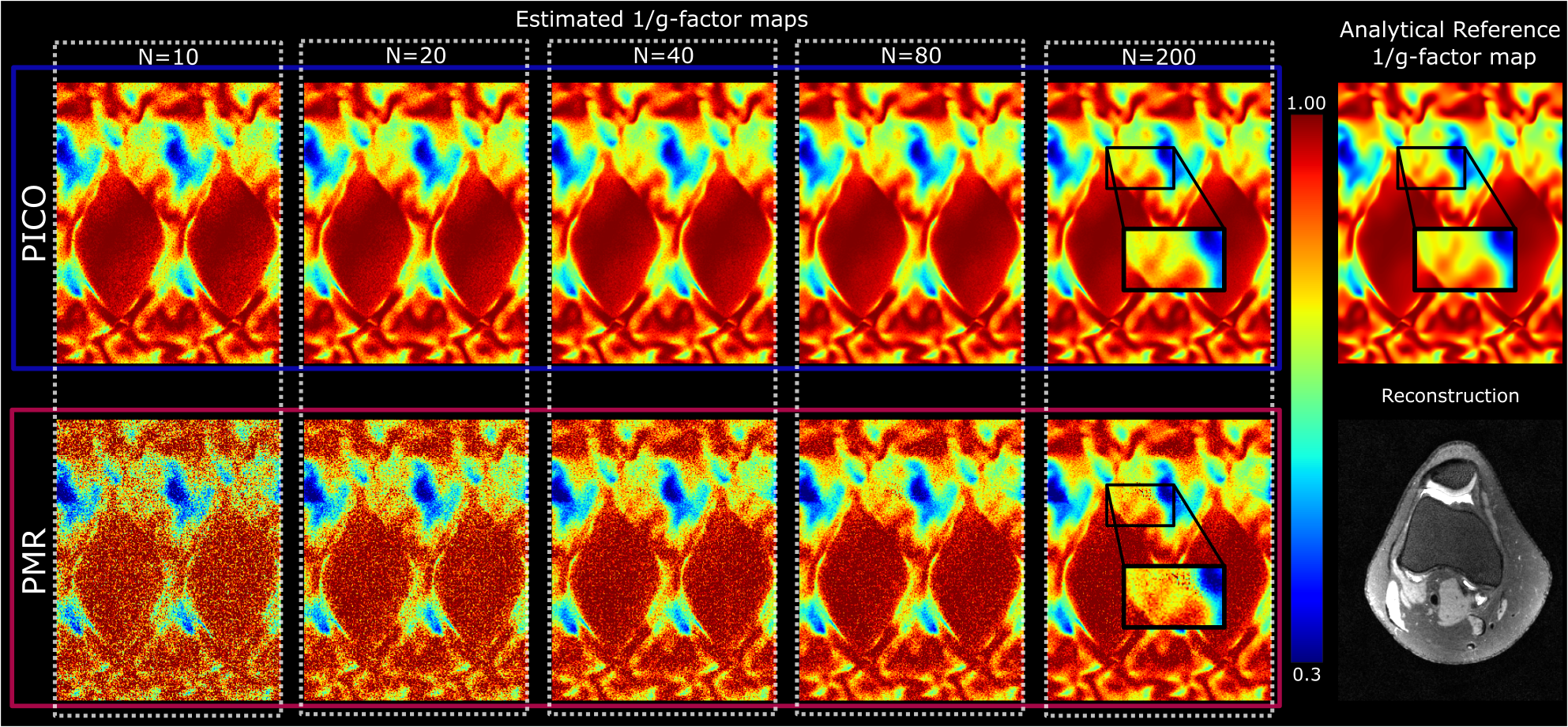}
    \caption{Reciprocal $g$-factor maps ($1/g$) for 
    Cartesian CG-SENSE at $R = 2$, shown for 
    increasing probe/replica counts 
    ($N = 10, 20, 40, 80, 200$). Top: PICO; bottom: 
    PMR. Right column: analytical SENSE reference 
    (top) and reconstructed magnitude image (bottom). 
    All panels share a fixed color scale set by the 
    analytical reference. PICO recovers the spatial 
    structure of the $g$-factor by $N = 10$ and is 
    visually converged by $N = 40$--$80$, whereas 
    PMR retains substantial granular noise 
    through $N = 80$.}
    \label{fig:knee_results}
\end{figure*}

\paragraph{Qualitative results.}
Figure~\ref{fig:knee_results} compares PICO with 
PMR against the closed-form SENSE $g$-factor on 
the Cartesian knee dataset ($R = 2$). PICO 
recovers the global $g$-factor distribution with 
as few as $N = 10$ probes and is visually 
indistinguishable from the analytical reference 
by $N = 40$--$80$; residual error at low $N$ 
manifests as low-frequency bias rather than 
pixel-wise grain. In contrast, PMR exhibits 
spatially uncorrelated noise that obscures fine 
structure even at $N = 80$ and only approaches 
PICO's smoothness at considerably higher replica 
counts.

\paragraph{Quantitative results.}
At $N = 200$, PICO achieved a mean NRMSE of $1.61 \pm 0.19\%$ compared with $8.05 \pm 1.90\%$ for PMR (Wilcoxon $p < 0.01$). Runtime per slice was comparable ($13.4$~s vs.\ $16.0$~s), with the modest advantage arising because each implicit covariance--vector product converges faster than a full CG-SENSE reconstruction.

\subsection{Non-Cartesian CG-SENSE}
\label{subsec:noncart_cgsense_noise_amp}

\begin{figure*}[t]
    \centering
\includegraphics[width=\textwidth]{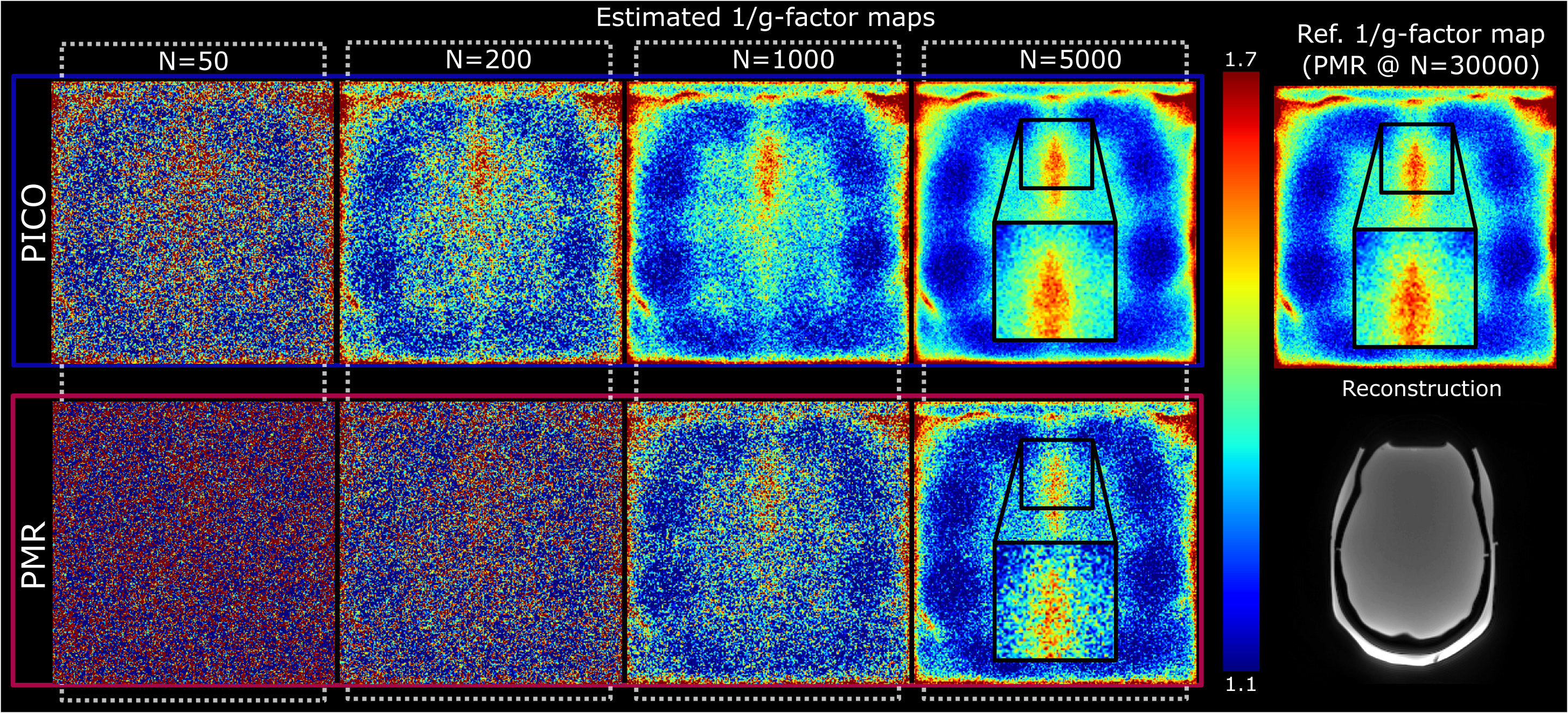}
    \caption{
    Voxelwise $g$-factor maps for non-Cartesian CG-SENSE ($R=2$, spiral trajectory). Columns show increasing sample counts ($N \in \{50, 200, 1000, 5000\}$); right column: PMR reference at $N = 30{,}000$ and magnitude image. Top: PICO; bottom: PMR. PICO resolves the $g$-factor structure by $N = 1000$, whereas PMR retains visible Monte Carlo artifacts even at $N = 5000$.
    }
    \label{fig:spiral_qual}
\end{figure*}

\begin{figure*}[t]
    \centering
    \includegraphics[width=\linewidth]{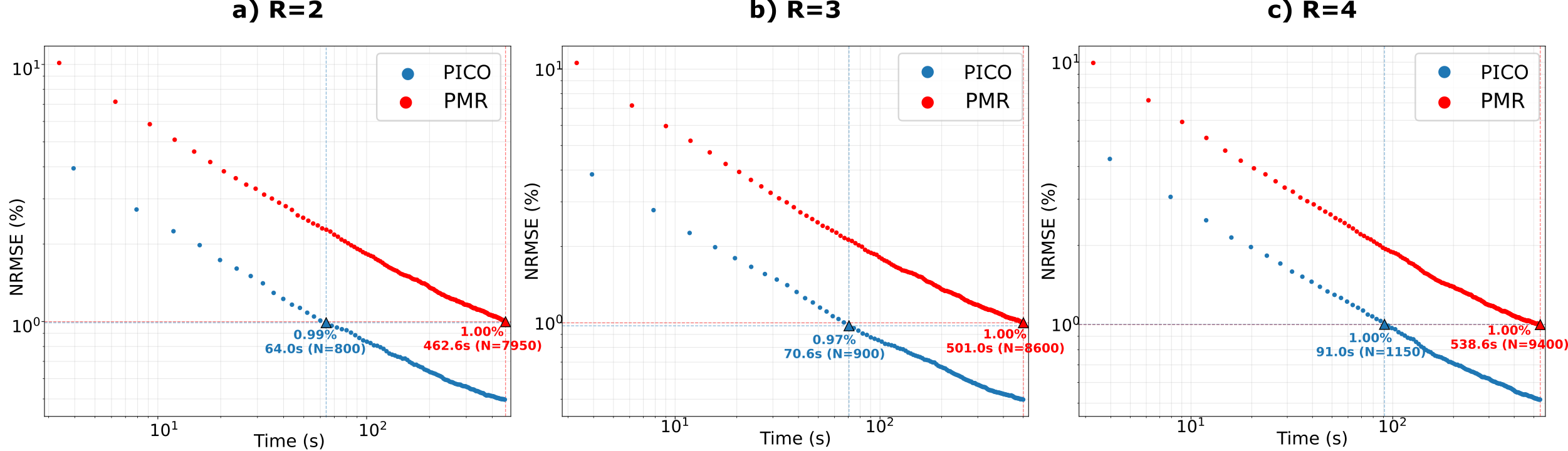}
    \caption{
    NRMSE vs.\ runtime for non-Cartesian CG-SENSE (log--log scale). \textbf{(a)}~$R=2$: PICO reaches $1\%$ NRMSE in $64.0$~s ($N = 800$); PMR requires $462.6$~s ($N = 7950$). \textbf{(b)}~$R = 3$: proposed $70.6$~s ($N = 900$) vs.\ PMR $501.0$~s ($N = 8600$). \textbf{(c)}~$R = 4$: proposed $91.0$~s ($N = 1150$) vs.\ PMR $538.6$~s ($N = 9400$). Dashed line: $1\%$ NRMSE threshold; triangles: first crossing for each method.
    }
    \label{fig:spiral_quant}
\end{figure*}

\paragraph{Qualitative results.}
Figure~\ref{fig:spiral_qual} shows the convergence of $g$-factor estimates for the spiral phantom at $R = 2$. The relatively poor conditioning of the non-Cartesian operator requires higher sample counts than the Cartesian case for both methods. PICO resolves the spatial $g$-factor structure by $N = 1000$; PMR remains dominated by granular noise at the same count and retains visible stochastic texture even at $N = 5000$. Sub-unity $g$-factors ($g < 1$) appear in some voxels; this is an expected consequence of Tikhonov regularization, which can reduce noise variance below the fully sampled reference (Appendix~\ref{app:tik-variance}).

\paragraph{Quantitative accuracy and runtime.}
Figure~\ref{fig:spiral_quant} plots NRMSE against wall-clock time on logarithmic axes for all three acceleration factors. At $R = 2$ (Fig.~\ref{fig:spiral_quant}a), PICO reached $1\%$ NRMSE in $64.0$~s ($N = 800$), compared with $462.6$~s ($N = 7950$) for PMR---an approximately $7.2\times$ speedup. At $R = 3$ (Fig.~\ref{fig:spiral_quant}b), PICO required $N = 900$ probes ($70.6$~s) versus $N = 8600$ replicas ($501.0$~s) for PMR ($\approx\!7.1\times$). At $R = 4$ (Fig.~\ref{fig:spiral_quant}c), the corresponding figures were $N = 1150$ probes ($91.0$~s) versus $N = 9400$ replicas ($538.6$~s; $\approx\!5.9\times$). Across all acceleration factors, going from $R = 2$ to $R = 4$ required only 350 additional probes for PICO versus 1450 additional replicas for PMR, consistent with the theoretical prediction that PMR's estimator variance scales quadratically with local noise variance (Appendix~\ref{app:tail_bounds}).

\paragraph{Probe distribution ablation.}
Figure~\ref{fig:ablation_results} compares the three probe distributions on the spiral dataset. At every $N$, random phase probes yielded the lowest NRMSE, followed by real-valued Rademacher, then complex Gaussian---consistent with the kurtosis-based variance analysis in Appendix~\ref{app:variance-derivation}.

\begin{figure}[t]
    \centering
    \includegraphics[width=0.49\textwidth]{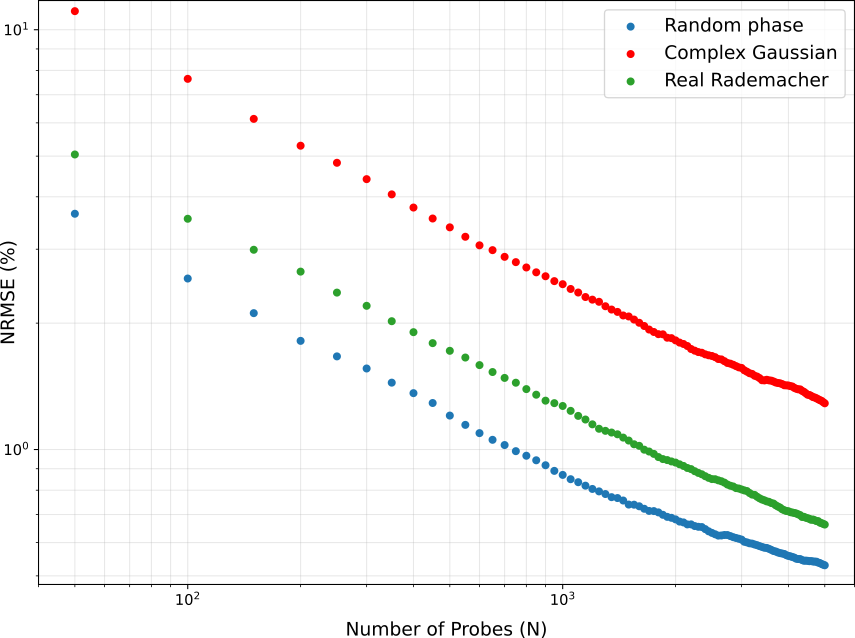}
\caption{
    NRMSE vs.\ probe count $N$ for three probe families on non-Cartesian CG-SENSE (log--log scale). Random phase (blue) achieves the lowest error at every $N$, followed by real Rademacher (green) and complex Gaussian (red), confirming the theoretical optimality of unit-magnitude random-phase probes. All three distributions exhibit approximately parallel power-law decay, with the vertical offset governed by the probe kurtosis (Appendix~\ref{app:variance-derivation}).
    }
    \label{fig:ablation_results}
\end{figure}

\subsection{Compressed-Sensing reconstruction}

\begin{figure*}[t]
    \centering
    \includegraphics[width=\textwidth]{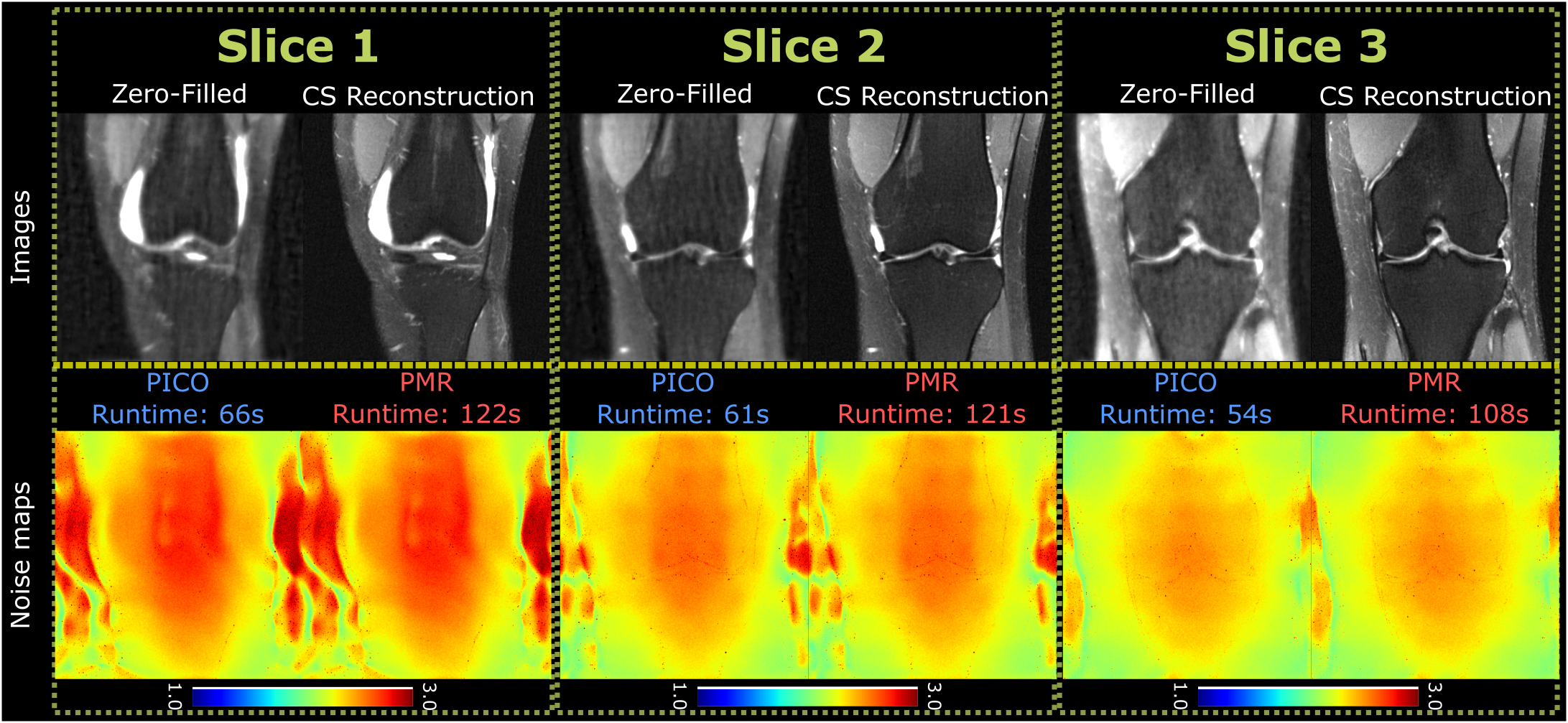}
    \caption{
    Noise standard deviation maps for TV-regularized CS reconstruction ($R=2$). Top: zero-filled and CS magnitude images. Bottom: noise maps from PICO (left) and PMR (right), with slice-specific runtimes. The two methods produce visually identical maps; PICO converges faster (e.g., 66~s vs.\ 122~s for Slice 1).
    }
    \label{fig:cs_efficiency}
\end{figure*}

\paragraph{Accuracy and efficiency.}
Figure~\ref{fig:cs_efficiency} compares noise standard deviation maps from PICO and PMR. The maps are visually indistinguishable, both capturing the data-dependent noise texture characteristic of TV regularization (suppression in uniform regions, preservation near edges). Across the dataset, PICO required $52.3 \pm 7.7$~s compared with $95.5 \pm 14.0$~s for PMR, a statistically significant $\approx\!1.8\times$ speedup (Wilcoxon $p < 0.05$).

\begin{figure*}[t]
    \centering
    \includegraphics[width=\textwidth]{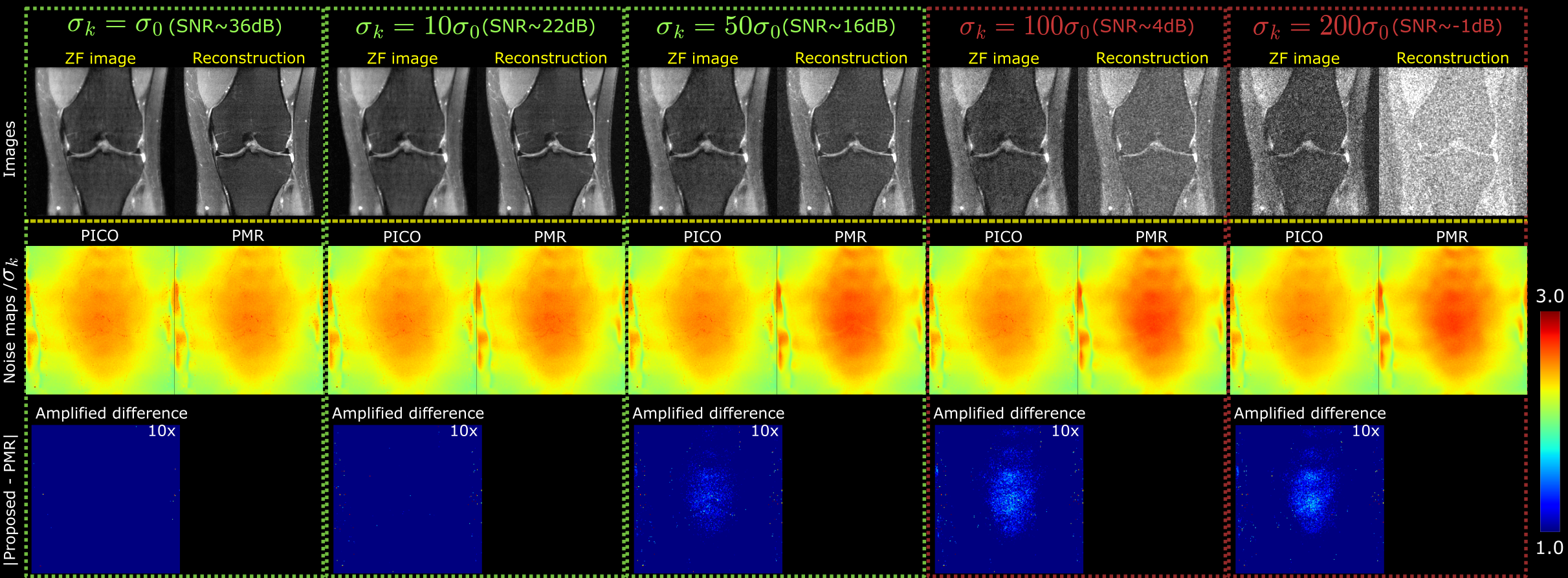}
    \caption{
    Robustness to input noise level in nonlinear CS reconstruction ($N = 3000$). Columns sweep the noise standard deviation from $\sigma_0$ to $200\sigma_0$ (SNR $\approx 45$ to $-1$\,dB). Top: magnitude images. Middle: normalized noise maps ($\sigma_{\mathrm{out}}/\sigma_k$) from PICO and PMR. Bottom: absolute difference and $\times 10$ amplified view. At moderate-to-high SNR ($\ge 16$\,dB), the two methods agree closely. At extreme noise levels ($\le 4$\,dB), residual discrepancies reflect higher-order nonlinear effects beyond the first-order Jacobian model.
    }
    \label{fig:cs_robustness}
\end{figure*}

\paragraph{Robustness to input noise level.}
Figure~\ref{fig:cs_robustness} shows normalized output noise maps across a range of input noise amplitudes (SNR $\approx 45$ to $-1$\,dB). Under a valid local linear model, the normalized map $\boldsymbol{\sigma}_{\hat{\mathbf{x}}}/\sigma_k$ is expected to remain stationary. At moderate-to-high SNR ($\ge 16$\,dB), PICO and PMR exhibit negligible differences, even in $\times 10$ amplified views. At extreme noise levels (SNR $\le 4$\,dB)---well below the SNR range encountered in standard acquisitions \cite{brown2014magnetic,WestbrookMRIinPractice}---the local linear approximation shows visible residual structure, as the active set of the TV proximal operator fluctuates substantially across noise realizations. Even in these stress-test conditions, however, the estimator correctly identifies the dominant heteroscedastic noise amplification patterns, and departures from linearity manifest as second-order effects.
\section{Discussion}

\paragraph{PICO versus PMR: from $k$-space perturbation to image-space probing.}
Image-Space Covariance Probing (PICO) differs from PMR in where randomness enters the pipeline, and this choice drives both the computational and statistical advantages observed in our experiments. 
PMR perturbs 
$k$-space and re-runs a full reconstruction per replica, 
treating the reconstruction as an opaque simulator of noise 
propagation \cite{pmr,dawood,dalmaz2025efficient}. PICO 
shifts the randomness to image space and works with the 
noise-propagation operator directly: in the linear case, 
estimating $\mathrm{diag}(\boldsymbol{\Sigma}_{\hat{\mathbf{x}}})$ 
is reduced to passing unit-magnitude random-phase probe images through 
$\boldsymbol{\Sigma}_{\hat{\mathbf{x}}}$; in the nonlinear 
case, we linearize the reconstruction map at the converged 
solution and use Jacobian--vector products. In both regimes, 
each probe evaluation reuses the same fast primitives already 
built for reconstruction ($\mathbf{A}$, 
$\mathbf{A}^{\mathrm{H}}$, Toeplitz normal operators, inner 
CG steps), avoiding repeated end-to-end solves. A second 
advantage is estimator variance: diagonal 
estimators admit low-variance probe distributions for 
Hermitian positive-semidefinite operators 
\cite{hutchinson,Avron2011}, and unit-magnitude random-phase probes probes 
used by PICO achieve the minimum possible kurtosis ($\kappa = 1$), 
halving the per-sample variance relative to Gaussian probes 
(Appendix~\ref{app:variance-derivation}). Empirically, this 
yields lower estimator variance at fixed $N$ and shorter 
time-to-target accuracy than PMR across all experiments.

\paragraph{Convergence in non-Cartesian settings.}
Both methods converge more slowly in non-Cartesian 
reconstructions than in the Cartesian case, reflecting the 
more globally coupled encoding: NUFFT interpolation and 
trajectory-dependent point-spread sidelobes spread each 
noise perturbation over larger spatial neighborhoods, 
requiring more samples to average out the resulting leakage. 
For PICO, this manifests as increased 
off-diagonal energy in $\boldsymbol{\Sigma}_{\hat{\mathbf{x}}}$, 
which governs the estimator variance 
(Appendix~\ref{app:tail_bounds}). PMR suffers an additional 
penalty: because it estimates variance from squared magnitudes 
of complex-Gaussian outputs, individual replicas in 
high-$g$-factor regions can produce disproportionately large 
fluctuations. These heavy-tailed outliers move the sample 
variance estimate, requiring substantially more replicas 
before the map stabilizes, yielding a substantial efficiency advantage across all tested acceleration factors (Fig.~\ref{fig:spiral_quant}).

\paragraph{Practical implications.}
The speed of PICO enables noise-mapping workflows
that are impractical with PMR. In protocol design, fast noise 
maps enable direct comparison of candidate acquisition 
configurations---e.g., evaluating how acceleration factor, 
phase-encoding direction, or coil layout affects local SNR in 
a target anatomy---without acquiring and reconstructing full 
datasets for each option 
\cite{robson2008snrgfactor,Breuer2009}. This is 
particularly valuable for non-Cartesian trajectories, where 
no closed-form $g$-factor formulas exist and protocol tuning 
has relied on intuition or time-consuming PMR simulations. 
At scan time, noise maps can identify degraded coil elements 
or suboptimal patient positioning before the patient leaves 
the scanner \cite{image_recon_SNR}. For regularized 
reconstructions, voxelwise noise maps offer a principled 
alternative to global heuristics for selecting $\lambda$: 
one can monitor how the spatial noise distribution evolves as 
$\lambda$ varies and select the operating point that achieves 
a desired local SNR floor 
\cite{knoll2017cs_clinical,hanhela2021regparam}. In 
multi-site and longitudinal studies, standardized $g$-factor 
reporting would strengthen reproducibility of quantitative 
biomarkers \cite{boudreau2023ismrm,boscolo2025multisite}; 
PICO could serve as a fast back-end within 
automated SNR evaluation platforms \cite{montin2026mroptimum}, which currently rely on 
PMR-based methods and would benefit from the reduced 
computation time.
More broadly, PICO could become a 
computational tool for the wide range of SNR 
characterization studies that have so far relied on 
simple Cartesian acquisitions with closed-form noise 
models---for example, the relative-SNR comparisons between 
supine and prone breast coil configurations \cite{hess2026breastsnr}, which currently 
depend on Cartesian body-coil normalization and would benefit 
from voxelwise noise maps that generalize to arbitrary 
trajectories and reconstruction methods.

Because PICO averages independent probe contributions, the empirical variance across probes provides a natural online convergence diagnostic: one can monitor the per-voxel standard error of the running mean and stop adding probes when it falls below a user-defined threshold. In the present work, we used fixed~$N$ for clean benchmarking against PMR, but an adaptive stopping rule based on the estimator's own variance would reduce computational cost in practice and could be implemented without additional reconstruction calls. The theoretical variance decomposition (Appendix~\ref{app:variance-derivation}, Eq.~\eqref{eq:var_single_sample}) quantifies the expected convergence rate as a function of probe count and covariance structure, providing a principled basis for such adaptive strategies.

Additional computational savings would be possible by estimating
variance on a reduced-resolution grid---noise maps are
governed primarily by coil geometry and vary more smoothly
than anatomical content---and interpolating back to native
resolution, a strategy compatible with both the proposed
estimator and PMR but most effective when combined with the
lower per-probe cost of PICO.
%

\paragraph{Extensions beyond voxelwise variance.}
This work focuses on the diagonal of the image-domain 
covariance, consistent with the standard definition of the 
$g$-factor and most prior noise-analysis literature 
\cite{sense,pmr,Breuer2009,akcakaya2014,dalmaz2025efficient}. 
However, the off-diagonal entries---capturing spatial 
cross-correlations between voxels---encode complementary 
information relevant to applications such as functional 
connectivity estimation in fMRI, where structured covariance 
is the central object of interest 
\cite{varoquaux2010covsel,behrouz2024bridgefc}. Extending the 
framework to approximate selected off-diagonal terms or 
low-rank covariance structure could link acquisition and 
reconstruction choices to downstream connectivity metrics. 
A related direction is propagating reconstruction-induced 
noise through parameter-estimation pipelines (e.g., $T_1/T_2$ 
mapping, diffusion) to obtain spatiotemporal uncertainty maps 
rather than a single static $g$-factor slice; recent work on 
dictionary-based quantitative MRI has demonstrated the value 
of explicitly quantifying matching uncertainty 
\cite{toner2026uqdict}.

\paragraph{Leveraging noise maps in downstream imaging tasks.}
Beyond reconstruction diagnostics, voxelwise noise-variance 
maps can be consumed directly by downstream processing. 
Classical adaptive filters (e.g., BM3D) use local variance 
to tune denoising thresholds \cite{hanchate2020,lidenoising2014}, 
and spatially varying noise estimates have improved 
content in low-dose CT and ultrasound 
\cite{Hariharan_2020,yu2002}. In deep-learning settings, 
feeding noise or uncertainty maps as additional input channels 
or using them for loss reweighting has been reported to reduce 
false positives and improve calibration 
\cite{dou2025,wang2022,homolya2025raki}. Recent work has demonstrated that providing 
$g$-factor maps directly as network inputs---together with 
SNR-unit-normalized images---yields architecture-agnostic 
denoising improvements and strong generalization across field 
strengths, contrasts, and anatomies 
\cite{xue2025snraware,xue2025imtransformer}. Those methods 
rely on closed-form GRAPPA-based $g$-factor computation, 
which is confined to Cartesian acquisitions, and revert to PMR 
for measuring output noise when evaluating the trained 
denoisers. PICO extends this capability in two ways: first, by making voxelwise noise maps available for non-Cartesian and nonlinear reconstruction pipelines where analytical $g$-factor formulas are unavailable, potentially enabling SNR-aware denoising for a broader class of acquisitions; and second, by replacing the PMR-based output-noise measurement step with a faster and lower-variance alternative.
Systematically coupling fast noise estimation with 
variance-aware filters and networks, and quantifying the 
benefit over simpler heuristic noise models under aggressive 
acceleration, is an important direction for future work.

\paragraph{Limitations.}
Although PICO achieves improved $g$-factor 
and noise estimation relative to PMR under matched computation 
time, this work does not directly assess how the improvements 
translate to observer perception or task-based diagnostic 
performance. Classical parallel-imaging studies have evaluated 
$g$-factor maps as physics-based summary metrics rather than 
tools inspected alongside images 
\cite{pruessmann1999sense,robson2008snrgfactor,Breuer2009}, 
yet broader literature on task-based image quality shows that 
objective noise measures can diverge from radiologist 
preferences \cite{barrett2015task,kastryulin2023iqamri,
ma2024radiqmri,tang2025rankiqmri}. Prospective reader studies 
that define application-specific operating points---how much 
estimator accuracy is needed before perceived noise texture 
and lesion conspicuity stop improving---would help align noise 
estimation research with downstream decision-making. A second 
limitation is the Jacobian-based approximation for nonlinear 
reconstructions, which is valid to first order in the noise 
amplitude. Our robustness experiments 
(Fig.~\ref{fig:cs_robustness}) confirm close agreement with 
PMR at moderate-to-high SNR, but at extreme noise levels the 
local linear model shows visible residual error. Extending 
the framework to capture higher-order nonlinear effects, or 
to provide formal uncertainty bounds on the linearization 
error itself, remains an open problem. 
A third direction is operator-aware probe design: 
random-phase probes are optimal among distributions 
(Appendix~\ref{app:variance-derivation}), but 
optimized probes that exploit the covariance's structure~\cite{bekas} 
could further reduce estimator variance.

\section{Conclusion}
We have presented a stochastic diagonal estimation framework 
for fast, voxelwise noise variance and $g$-factor mapping in 
MRI reconstructions. For linear reconstructions such as 
CG-SENSE---including the unaccelerated case---the method 
recovers the exact diagonal of the image-domain noise 
covariance without any model approximation; for nonlinear 
reconstructions (e.g., compressed sensing with TV 
regularization), it extends naturally through Jacobian--vector 
products. In both settings, unit-magnitude random-phase probes provably minimize the per-sample estimator variance.
 
Across Cartesian, non-Cartesian, and compressed-sensing 
experiments, PICO matched or exceeded the 
accuracy of high-replica PMR references while requiring a 
fraction of the computation time. These results demonstrate 
that rigorous noise characterization need not be a 
computational bottleneck: the same operator primitives used 
for image reconstruction can be reused to produce voxelwise 
SNR maps as a routine by-product of the reconstruction 
pipeline. We anticipate that embedding fast noise estimation 
into standard MRI workflows will support protocol 
optimization, quality assurance, and downstream 
uncertainty-aware analysis across a broad range of 
reconstruction methods.

\section*{Data Availability Statement}
All image data, source code for reconstruction experiments, and noise estimation toolbox are publicly available at \url{https://https://github.com/onat-dalmaz/pico_noise_estimation_toolbox}.

\bibliography{MRM-AMA}%
\vfill\pagebreak
 
\section*{Supporting Information}
 
\appendix

\section{PICO Algorithm Pseudocode}
\label{app:PICO_algorithm}

Algorithm~\ref{alg:PICO} summarizes the PICO procedure 
as implemented in all experiments reported in the main 
text. The algorithm is stated for the linear case, 
where the covariance operator 
$\boldsymbol{\Sigma}_{\hat{\mathbf{x}}} 
= \mathbf{R}\mathbf{R}^{\mathrm{H}}$ is applied 
implicitly via forward/adjoint primitives and a CG solve 
on the normal equations. For nonlinear reconstructions, 
the covariance--vector product 
$\boldsymbol{\Sigma}_{\hat{\mathbf{x}}}\mathbf{v}^{(i)}$ 
in line~\ref{line:cov_apply} is replaced by a 
Jacobian--vector product 
$\mathbf{J}_f(\mathbf{k}_0)\mathbf{v}^{(i)}$ evaluated 
via automatic differentiation at the converged 
reconstruction (Sec.~\ref{subsec:cs_extension}); the 
remainder of the algorithm is unchanged.

\begin{algorithm}[t]
\caption{Probing Image-space COvariance (PICO)}
\label{alg:PICO}
\begin{algorithmic}[1]
\Require Encoding operator $\mathbf{A}$ (and adjoint 
         $\mathbf{A}^{\mathrm{H}}$); regularization 
         parameter $\lambda \ge 0$; number of probes $N$; 
         image dimension $N_x$.
\Ensure  Estimated voxelwise noise variance map 
         $\hat{\boldsymbol{\sigma}}^{2}_{\hat{\mathbf{x}}} 
         \approx 
         \mathrm{diag}(\boldsymbol{\Sigma}_{\hat{\mathbf{x}}})$.
\State $\mathbf{s} \gets \mathbf{0} \in \mathbb{C}^{N_x}$ 
       \Comment{Running sum across probes}
\For{$i = 1$ \textbf{to} $N$}
    \State \Comment{(1) Draw a unit-magnitude random-phase probe}
    \State $\theta_k \sim \mathrm{Uniform}[0, 2\pi]$ 
           for $k = 1, \dots, N_x$
    \State $v^{(i)}_k \gets e^{j\theta_k}$ 
           for $k = 1, \dots, N_x$
    \State \Comment{(2, 3) Apply $\boldsymbol{\Sigma}_{\hat{\mathbf{x}}}$ implicitly via CG}
    \State $\mathbf{u}^{(i)} \gets 
           \boldsymbol{\Sigma}_{\hat{\mathbf{x}}}\,\mathbf{v}^{(i)}$ 
           via CG on the normal equations 
           \label{line:cov_apply}
    \State \Comment{(4) Accumulate elementwise product}
    \State $\mathbf{s} \gets \mathbf{s} 
           + \mathbf{v}^{(i)*} \odot \mathbf{u}^{(i)}$
\EndFor
\State $\hat{\boldsymbol{\sigma}}^{2}_{\hat{\mathbf{x}}} 
       \gets \mathrm{Re}\!\left(\mathbf{s} / N\right)$ 
       \Comment{Imaginary part vanishes in expectation}
\end{algorithmic}
\end{algorithm}

Each probe response in line~\ref{line:cov_apply} is 
obtained by applying $\boldsymbol{\Sigma}_{\hat{\mathbf{x}}} 
= \mathbf{R}\mathbf{R}^{\mathrm{H}}$ implicitly through 
the same forward/adjoint primitives 
($\mathbf{A}$, $\mathbf{A}^{\mathrm{H}}$, FFT/NUFFT, 
Toeplitz embedding, coil sensitivity maps) and CG 
solver already present in the CG-SENSE reconstruction 
pipeline. The reconstruction operator $\mathbf{R}$ and 
the covariance matrix 
$\boldsymbol{\Sigma}_{\hat{\mathbf{x}}}$ are never 
formed explicitly. Computational cost is dominated by 
the per-probe CG work in line~\ref{line:cov_apply}, 
which reuses the same inner solver as CG-SENSE 
reconstruction.
 
\section{Variance of the Stochastic Diagonal Estimator}
\label{app:variance-derivation}
 
Let 
$\boldsymbol{\Sigma}_{\hat{\mathbf{x}}} 
= \mathbf{R}\mathbf{R}^{\mathrm{H}} \in \mathbb{C}^{N_x \times N_x}$
be positive-semidefinite (and necessarily Hermitian by construction), and define for a single probe image
$\mathbf{v} \in \mathbb{C}^{N_x}$ with i.i.d.\ entries satisfying
$\mathbb{E}[v_k] = 0$,
$\mathbb{E}[|v_k|^2] = 1$, and
$\mathbb{E}[v_k^{*}v_\ell] = 0$ for $k \neq \ell$,
the voxelwise point estimator
\[
  \delta_k(\mathbf{v})
  = v_k^{*}\,
  \bigl(\boldsymbol{\Sigma}_{\hat{\mathbf{x}}}\,\mathbf{v}\bigr)_k,
  \qquad k = 1,\dots,N_x.
\]
Because $\mathbb{E}[v_k^{*}v_\ell] = \delta_{k\ell}$,
the estimator is unbiased:
$\mathbb{E}[\delta_k(\mathbf{v})]
= (\boldsymbol{\Sigma}_{\hat{\mathbf{x}}})_{kk}$.
 
\paragraph{Single-sample variance.}
Let $\kappa := \mathbb{E}[|v_k|^{4}]$ denote the fourth moment of
the probe distribution. Expanding
$(\boldsymbol{\Sigma}_{\hat{\mathbf{x}}}\mathbf{v})_k
= (\boldsymbol{\Sigma}_{\hat{\mathbf{x}}})_{kk}v_k
+ \sum_{\ell \neq k} (\boldsymbol{\Sigma}_{\hat{\mathbf{x}}})_{k\ell} v_\ell$
and using independence yields
\begin{equation}
\operatorname{Var}\!\bigl[\delta_k(\mathbf{v})\bigr]
  = (\kappa - 1)\,
    \bigl|(\boldsymbol{\Sigma}_{\hat{\mathbf{x}}})_{kk}\bigr|^{2}
   + \sum_{\ell \neq k}
    \bigl|(\boldsymbol{\Sigma}_{\hat{\mathbf{x}}})_{k\ell}\bigr|^{2}.
\tag{A1}\label{eq:var_single_sample}
\end{equation}

The per-sample variance separates into two contributions: a \emph{diagonal-dependent term} $(\kappa - 1)\,|(\boldsymbol{\Sigma}_{\hat{\mathbf{x}}})_{kk}|^2$ that is controlled by the kurtosis $\kappa$ of the probe distribution and scales with the squared local noise variance, and an \emph{off-diagonal interference term} $\sum_{\ell \neq k}|(\boldsymbol{\Sigma}_{\hat{\mathbf{x}}})_{k\ell}|^2$ that is invariant across probe distributions and depends only on the noise leakage from neighboring voxels into voxel $k$. Minimizing the per-sample variance subject to the unit-variance constraint $\mathbb{E}[|v_k|^2] = 1$ therefore reduces to minimizing $\kappa$, which determines the choice of probe distribution analyzed below.

\paragraph{Choice of probe distribution.}
Three distributions satisfying the probe conditions are compared:
\begin{itemize}
\item \emph{Complex Gaussian probes}
  ($v_k \sim \mathcal{CN}(0,1)$)
  yield $\kappa = 2$, contributing one unit of the diagonal squared magnitude to the per-sample variance for every probe.
\item \emph{Real $\pm 1$ Rademacher probes}
  ($v_k \in \{-1,+1\}$ with $\mathbb{P}(v_k = +1) = \mathbb{P}(v_k = -1) = 1/2$) attain $\kappa = 1$, eliminating the diagonal-dependent contribution. Since $\kappa \ge 1$ for any unit-variance distribution (by Jensen's inequality), this is the kurtosis-optimal real-valued choice.
\item \emph{Unit-magnitude random-phase probes}
  ($v_k = e^{\mathrm{i}\theta_k}$,
  $\theta_k \sim \mathsf{U}[0,2\pi]$, so that $|v_k| = 1$ deterministically while only the phase varies across draws) also attain $\kappa = 1$, matching the kurtosis of real Rademacher probes. Unlike real Rademacher, however, each unit-magnitude random-phase sample carries two real degrees of freedom---the sine and cosine of a uniformly distributed phase---which permits the estimator to interact with the imaginary components of the off-diagonal entries $(\boldsymbol{\Sigma}_{\hat{\mathbf{x}}})_{k\ell}$. For the complex Hermitian covariances arising in MRI reconstruction, this doubles the information extracted per sample relative to real Rademacher and halves the number of probes required to reach a given target accuracy.
\end{itemize}

\noindent\textbf{Proposition} (Optimal probe distribution).
Under the constraints
$\mathbb{E}[\mathbf{v}] = \mathbf{0}$ and
$\mathbb{E}[\mathbf{v}\mathbf{v}^{\mathrm{H}}] = \mathbf{I}$,
unit-magnitude random-phase probes minimize the per-sample
variance $\operatorname{Var}[\delta_k(\mathbf{v})]$ in \eqref{eq:var_single_sample} for complex Hermitian positive-semidefinite $\boldsymbol{\Sigma}_{\hat{\mathbf{x}}}$, and are therefore
optimal for the proposed stochastic diagonal estimator.

\section{Theoretical Error Bounds and Convergence Rates}
\label{app:tail_bounds}
 
We analyze the concentration properties of the two estimators 
using standard inequalities for sums of independent random 
variables \cite{Vershynin_2026}.
 
\subsection{PMR: sub-exponential concentration}
The PMR estimator $\hat{\sigma}^2_{\mathrm{PMR}}$ is the 
sample variance of $N$ complex-Gaussian reconstruction 
outputs. The squared magnitude of each output follows an 
exponential distribution, so the sum follows a Gamma 
distribution with sub-exponential tails. Applying the Bernstein inequality for sums of independent sub-exponential random variables \cite[Theorem~2.8.1]{Vershynin_2026} yields
\begin{equation}
\mathbb{P}\!\left( 
  \bigl| \hat{\sigma}^2_{\mathrm{PMR}} - \sigma_k^2 \bigr| 
  \ge t \right) 
\le 2 \exp\!\left( -c\,N \min\!\left( 
  \frac{t^2}{(\sigma_k^2)^2},\; 
  \frac{t}{\sigma_k^2} \right) \right).
\end{equation}
The decay rate is controlled by $(\sigma_k^2)^2$: as 
acceleration increases, local noise variance $\sigma_k^2$ 
grows via the $g$-factor, and the probability of large 
deviations increases quadratically. PMR is therefore 
disproportionately unstable in high-$g$-factor regions.
 
\subsection{PICO: sub-Gaussian concentration}
PICO uses unit-magnitude random-phase probes with
$|v_k| = 1$ deterministically. The diagonal-dependent contribution to the single-sample variance (Eq.~\eqref{eq:var_single_sample}) vanishes because the kurtosis term $(\kappa - 1)$ is zero; estimation error
arises solely from off-diagonal interference. Since the real and imaginary parts of each probe component are bounded on $[-1,1]$, the per-voxel estimator $\delta_k(\mathbf{v})$ is a sum of bounded random variables, and Hoeffding's inequality \cite[Theorem~2.2.5]{Vershynin_2026} applied to the real and imaginary parts gives
\begin{equation}
\mathbb{P}\!\left( 
  \bigl| \hat{\sigma}^2_{\mathrm{Prop}} - \sigma_k^2 \bigr| 
  \ge t \right) 
\le 2 \exp\!\left( 
  -\frac{2\,N\,t^2}
  {\sum_{j \neq k} 
   |(\boldsymbol{\Sigma}_{\hat{\mathbf{x}}})_{kj}|^2} 
\right).
\end{equation}
The bound depends on the off-diagonal energy (noise leakage) 
but is independent of the diagonal variance $\sigma_k^2$ 
itself. At low acceleration, the covariance is nearly 
diagonal and the denominator is small, yielding rapid 
convergence. At high acceleration, off-diagonal energy 
increases, but the estimator does not incur the 
$(\sigma_k^2)^2$ penalty of PMR, and the sub-Gaussian decay 
($\exp(-t^2)$) ensures tighter concentration than the 
sub-exponential tails characteristic of PMR. This confirms a 
fundamental stability advantage: PICO 
decouples estimation precision from the local noise level.

\section{Variance Shrinkage under Tikhonov Regularization}
\label{app:tik-variance}
 
We work in the pre-whitened setting of 
Sec.~\ref{sec:theory}, where
$\mathbf{b}_w = \mathbf{A}\mathbf{x} + \mathbf{n}_w$
with $\mathbf{n}_w \sim \mathcal{CN}(\mathbf{0}, \mathbf{I})$.
Let $\mathbf{M} := \mathbf{A}^{\mathrm{H}}\mathbf{A} 
\succeq \mathbf{0}$.
 
\begin{proposition}[Modewise variance shrinkage]
\label{prop:tik-variance-shrink-app}
Assume $\mathbf{M}$ is positive-definite, and fix $\lambda > 0$. The unregularized least-squares and Tikhonov reconstruction operators
\[
\mathbf{R}_{\mathrm{LS}}
= (\mathbf{A}^{\mathrm{H}}\mathbf{A})^{-1}
  \mathbf{A}^{\mathrm{H}},
\qquad
\mathbf{R}_{\lambda}
= (\mathbf{A}^{\mathrm{H}}\mathbf{A}+\lambda\mathbf{I})^{-1}
  \mathbf{A}^{\mathrm{H}}
\]
have image-domain noise covariances
\[
\boldsymbol{\Sigma}_{\hat{\mathbf{x}},\mathrm{LS}}
= \mathbf{M}^{-1},
\qquad
\boldsymbol{\Sigma}_{\hat{\mathbf{x}},\lambda}
= (\mathbf{M}+\lambda\mathbf{I})^{-1}
  \mathbf{M}\,
  (\mathbf{M}+\lambda\mathbf{I})^{-1}.
\]
Then 
$\boldsymbol{\Sigma}_{\hat{\mathbf{x}},\lambda} 
\prec \boldsymbol{\Sigma}_{\hat{\mathbf{x}},\mathrm{LS}}$
in the Loewner order for every $\lambda > 0$, and 
consequently
$\mathrm{diag}(\boldsymbol{\Sigma}_{\hat{\mathbf{x}},\lambda})
< \mathrm{diag}(\boldsymbol{\Sigma}_{\hat{\mathbf{x}},\mathrm{LS}})$
elementwise.
\end{proposition}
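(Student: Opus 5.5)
Since $\mathbf{M}$ is Hermitian positive-definite, all three operators in question are functions of $\mathbf{M}$, so the natural route is simultaneous diagonalization. Write $\mathbf{M} = \mathbf{U}\boldsymbol{\Lambda}\mathbf{U}^{\mathrm{H}}$ with $\mathbf{U}$ unitary and $\boldsymbol{\Lambda} = \mathrm{diag}(\mu_1,\dots,\mu_{N_x})$, all $\mu_m > 0$. The covariance formulas in the statement follow directly from Eq.~\eqref{eq:cov_general_theory}, $\boldsymbol{\Sigma} = \mathbf{R}\mathbf{R}^{\mathrm{H}}$. For $\mathbf{R}_{\mathrm{LS}}$ this gives $\mathbf{M}^{-1}\mathbf{M}\mathbf{M}^{-1} = \mathbf{M}^{-1}$. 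For $\mathbf{R}_\lambda$ it gives the sandwich form, as already recorded in Eqs.~\eqref{eq:cov_tikh_theory}--\eqref{eq:cov_ls_theory}; I will simply cite these. In the eigenbasis both covariances are diagonal:
\[
\boldsymbol{\Sigma}_{\hat{\mathbf{x}},\mathrm{LS}} = \mathbf{U}\,\mathrm{diag}(1/\mu_m)\,\mathbf{U}^{\mathrm{H}},\qquad
\boldsymbol{\Sigma}_{\hat{\mathbf{x}},\lambda} = \mathbf{U}\,\mathrm{diag}\bigl(\mu_m/(\mu_m+\lambda)^2\bigr)\,\mathbf{U}^{\mathrm{H}}.
\]

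The Loewner claim then reduces to a scalar inequality for each mode. The difference is
\[
\mathbf{D} := \boldsymbol{\Sigma}_{\hat{\mathbf{x}},\mathrm{LS}} - \boldsymbol{\Sigma}_{\hat{\mathbf{x}},\lambda} = \mathbf{U}\,\mathrm{diag}(d_m)\,\mathbf{U}^{\mathrm{H}},
\qquad
d_m = \frac{1}{\mu_m} - \frac{\mu_m}{(\mu_m+\lambda)^2} = \frac{(\mu_m+\lambda)^2 - \mu_m^2}{\mu_m(\mu_m+\lambda)^2} = \frac{2\lambda\mu_m + \lambda^2}{\mu_m(\mu_m+\lambda)^2}.
\]
Every $d_m$ is strictly positive whenever $\lambda > 0$ and $\mu_m > 0$. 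Hence $\mathbf{D} \succ \mathbf{0}$, which is exactly the statement $\boldsymbol{\Sigma}_{\hat{\mathbf{x}},\lambda} \prec \boldsymbol{\Sigma}_{\hat{\mathbf{x}},\mathrm{LS}}$. This is also what "modewise shrinkage" means: each eigenmode's variance is multiplied by $\mu_m^2/(\mu_m+\lambda)^2 < 1$.

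For the elementwise diagonal consequence, I would evaluate the quadratic form at the standard basis vectors: $\mathbf{D}_{kk} = \mathbf{e}_k^{\mathrm{H}}\mathbf{D}\,\mathbf{e}_k > 0$ because $\mathbf{D}$ is positive-definite and $\mathbf{e}_k \neq \mathbf{0}$. This gives strict inequality at every voxel. Explicitly, $\mathbf{D}_{kk} = \sum_m d_m |U_{km}|^2$, and it is strictly positive because each row of the unitary $\mathbf{U}$ has unit norm.

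There is no real obstacle here. The only point needing care is strictness. Strict Loewner ordering requires positive-definiteness of $\mathbf{M}$ (so that every $\mu_m > 0$) together with $\lambda > 0$; both are hypotheses of the proposition. If $\mathbf{M}$ were singular, $\mathbf{M}^{-1}$ would not exist, and that is why the assumption is needed.
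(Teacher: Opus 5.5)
Your proof is correct and is essentially the paper's argument. Both diagonalize $\mathbf{M}$ and compare the two covariances mode by mode: the paper writes them as congruences $\mathbf{M}^{-1/2}\mathbf{X}^2\mathbf{M}^{-1/2}$ and $\mathbf{M}^{-1/2}\mathbf{I}\,\mathbf{M}^{-1/2}$ with $\mathbf{X}^2 \prec \mathbf{I}$ (the eigenvalues of $\mathbf{X}^2$ are $\mu_m^2/(\mu_m+\lambda)^2$, exactly your shrinkage factors), while you subtract the eigenvalues directly. Your explicit step $\mathbf{e}_k^{\mathrm{H}}\mathbf{D}\,\mathbf{e}_k>0$ spells out the strict elementwise diagonal inequality, which the paper leaves implicit.
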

 
\begin{proof}
Let $\mathbf{M} 
= \mathbf{U}\boldsymbol{\Lambda}\mathbf{U}^{\mathrm{H}}$
with eigenvalues $\lambda_i > 0$. Define
$\mathbf{X} 
:= \mathbf{M}^{1/2}(\mathbf{M}+\lambda\mathbf{I})^{-1}
   \mathbf{M}^{1/2}
= \mathbf{U}\,
  \mathrm{diag}\!\bigl(\tfrac{\lambda_i}{\lambda_i+\lambda}\bigr)
  \mathbf{U}^{\mathrm{H}}$.
Then
$\boldsymbol{\Sigma}_{\hat{\mathbf{x}},\lambda}
= \mathbf{M}^{-1/2}\mathbf{X}^2\mathbf{M}^{-1/2}$
and
$\boldsymbol{\Sigma}_{\hat{\mathbf{x}},\mathrm{LS}}
= \mathbf{M}^{-1/2}\mathbf{I}\,\mathbf{M}^{-1/2}$.
Since $0 < \tfrac{\lambda_i}{\lambda_i+\lambda} < 1$ for all 
$\lambda_i > 0$, we have 
$\mathbf{0} \prec \mathbf{X}^2 \prec \mathbf{I}$.
Congruence with $\mathbf{M}^{-1/2}$ preserves the order, 
giving the result.
\end{proof}
 
\begin{corollary}[Regularization can yield $g < 1$]
\label{cor:g-less-than-one-app}
With $g(i) 
= \sigma_{\hat{\mathbf{x}},\mathrm{acc},\lambda}(i) 
  \big/ \bigl(\sqrt{R}\,
  \sigma_{\hat{\mathbf{x}},\mathrm{ref}}(i)\bigr)$,
classical unregularized SENSE implies $g \ge 1$.
Tikhonov regularization introduces shrinkage factors 
$(\lambda_j / (\lambda_j + \lambda))^2 < 1$ along each 
singular mode $j$. Where this shrinkage outweighs the 
geometric noise inflation from undersampling, 
$g(i) < 1$ is observed.
\end{corollary}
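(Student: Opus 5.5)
The corollary has two parts: the classical bound $g\ge 1$ for unregularized SENSE, and the possibility of $g<1$ once Tikhonov regularization is added. I will prove them separately. The second part reuses Proposition~\ref{prop:tik-variance-shrink-app} together with its eigendecomposition, and turns the informal phrase ``where this shrinkage outweighs the geometric noise inflation'' into a precise existence statement.

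\emph{Step 1 ($g\ge 1$ for unregularized Cartesian SENSE).} Under uniform Cartesian undersampling, $\mathbf{M}_{\mathrm{acc}}=\mathbf{A}_{\mathrm{acc}}^{\mathrm{H}}\mathbf{A}_{\mathrm{acc}}$ decouples into independent aliasing groups of $R$ voxels. Let $\mathbf{S}$ denote the coil-sensitivity matrix of one group, with one column $\mathbf{s}_i$ per aliased voxel. With the Fourier normalization that carries the per-sample noise through to the image, the accelerated variance at voxel $i$ is $\sigma^2_{\mathrm{acc}}(i)=R\,[(\mathbf{S}^{\mathrm{H}}\mathbf{S})^{-1}]_{ii}$. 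The fully sampled reference is diagonal, with $\sigma^2_{\mathrm{ref}}(i)=1/(\mathbf{S}^{\mathrm{H}}\mathbf{S})_{ii}=1/\|\mathbf{s}_i\|^2$. Hence
\[
g(i)^2=[(\mathbf{S}^{\mathrm{H}}\mathbf{S})^{-1}]_{ii}\,(\mathbf{S}^{\mathrm{H}}\mathbf{S})_{ii},
\]
and $g(i)\ge 1$ reduces to a standard matrix fact. For any Hermitian positive-definite $\mathbf{H}$, $(\mathbf{H}^{-1})_{ii}\,\mathbf{H}_{ii}\ge 1$. This follows from Cauchy--Schwarz in the $\mathbf{H}$-inner product:
\[
1=(\mathbf{e}_i^{\mathrm{H}}\mathbf{e}_i)^2=\bigl(\mathbf{e}_i^{\mathrm{H}}\mathbf{H}^{1/2}\mathbf{H}^{-1/2}\mathbf{e}_i\bigr)^2\le(\mathbf{e}_i^{\mathrm{H}}\mathbf{H}\mathbf{e}_i)(\mathbf{e}_i^{\mathrm{H}}\mathbf{H}^{-1}\mathbf{e}_i).
\]
Equivalently, one can use the Schur complement.

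\emph{Step 2 (modewise shrinkage).} Using the eigendecomposition $\mathbf{M}=\mathbf{U}\boldsymbol{\Lambda}\mathbf{U}^{\mathrm{H}}$ from the proof of Proposition~\ref{prop:tik-variance-shrink-app}, write
\[
\boldsymbol{\Sigma}_{\hat{\mathbf{x}},\lambda}=\mathbf{U}\,\mathrm{diag}\!\left(\lambda_j^{-1}\Bigl(\tfrac{\lambda_j}{\lambda_j+\lambda}\Bigr)^2\right)\mathbf{U}^{\mathrm{H}}.
\]
This exhibits the shrinkage factors $(\lambda_j/(\lambda_j+\lambda))^2<1$ applied to the LS modes $\lambda_j^{-1}$. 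The elementwise diagonal inequality of the Proposition then gives $g_\lambda(i)<g_{\mathrm{LS}}(i)$ strictly for every $\lambda>0$.

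\emph{Step 3 (existence of $g<1$).} Since $(\lambda_j/(\lambda_j+\lambda))^2\lambda_j^{-1}\le \lambda_j/\lambda^2$, we get $\mathrm{diag}(\boldsymbol{\Sigma}_{\hat{\mathbf{x}},\lambda})\to\mathbf{0}$ as $\lambda\to\infty$. The reference variance is fixed and positive, so $g_\lambda(i)\to 0$. Moreover, $\lambda\mapsto g_\lambda(i)$ is continuous and strictly decreasing, as a sum of strictly decreasing modal terms. Combined with $g_{\mathrm{LS}}(i)\ge 1$ from Step 1, there is a threshold $\lambda^\ast(i)$, obtained by the intermediate value theorem, beyond which $g_\lambda(i)<1$. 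This is the precise sense in which ``shrinkage outweighs inflation.''

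\emph{Main obstacle.} I expect the main difficulty in Step 1 to be bookkeeping the normalization convention, i.e.\ the factor $R$ and the FFT scaling, so that the reference and accelerated variances are compared consistently with Eq.~\eqref{eq:g_factor_def}. The non-Cartesian case has no such block structure, so there I would only claim the Step 2--3 statement relative to $g_{\mathrm{LS}}$.
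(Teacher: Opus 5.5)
Your proposal is correct and more complete than what the paper gives. The paper offers no separate argument for this corollary. It takes $g\ge 1$ from classical SENSE theory without proof, and its only mathematics is the modewise factorization behind the Loewner-order Proposition, which is exactly your Step~2. The clause ``where this shrinkage outweighs the geometric noise inflation, $g(i)<1$ is observed'' is left as a qualitative remark tied to the sub-unity voxels in the $\lambda=0.1$ spiral maps.

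Your Step~1 supplies the classical bound through $(\mathbf{H}^{-1})_{ii}\mathbf{H}_{ii}\ge 1$ on each aliasing block. The normalization issue you anticipate resolves cleanly in the paper's model. With a unitary FFT and row-subset undersampling, $\mathbf{F}^{\mathrm{H}}\mathbf{P}^{\mathrm{T}}\mathbf{P}\mathbf{F}$ is $R^{-1}$ times the sum of the $R$ aliasing shifts. So the accelerated normal operator on each group is $R^{-1}\mathbf{S}^{\mathrm{H}}\mathbf{S}$, and $\sigma^2_{\mathrm{acc}}(i)=R\,[(\mathbf{S}^{\mathrm{H}}\mathbf{S})^{-1}]_{ii}$, exactly as you wrote.

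Your Step~3 turns the informal clause into a checkable statement. The quantity $(\boldsymbol{\Sigma}_{\hat{\mathbf{x}},\lambda})_{ii}=\sum_j |(\mathbf{U})_{ij}|^2\lambda_j/(\lambda_j+\lambda)^2$ is continuous, strictly decreasing and tends to $0$. Each voxel therefore has its own threshold $\lambda^\ast(i)$. At a fixed $\lambda$, the set $\{i:\lambda^\ast(i)<\lambda\}$ is precisely ``where shrinkage outweighs inflation,'' which also explains why $g<1$ appears only in some voxels.

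The paper's version buys brevity and a direct link to the experiment; yours buys an actual proof, at the cost of making explicit two assumptions the statement leaves implicit:
\begin{itemize}
\item The reference $\sigma_{\hat{\mathbf{x}},\mathrm{ref}}$ must not depend on $\lambda$, as the notation suggests. If the fully sampled reference were regularized with the same $\lambda$, your monotonicity argument would need a separate treatment.
\item $g_{\mathrm{LS}}\ge 1$ is established only for uniform Cartesian sampling. In the non-Cartesian case, Step~3 still yields $g<1$ for large $\lambda$, but the threshold is no longer the crossing of an inflation factor known to be at least one.
\end{itemize}
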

 
\paragraph{Remarks.}
(i) Under non-white acquisition noise, all statements hold 
after pre-whitening with 
$\mathbf{A} = \mathbf{L}^{-1}\widetilde{\mathbf{A}}$ and 
$\mathbf{n}_w = \mathbf{L}^{-1}\mathbf{n}$, where 
$\mathbf{L}\mathbf{L}^{\mathrm{H}} 
= \boldsymbol{\Sigma}_{\mathbf{n}}$.
(ii) If $\mathbf{M}$ is rank-deficient, the inequalities are 
strict on its positive spectrum; nullspace components do not 
affect voxel variances within the range of 
$\mathbf{A}^{\mathrm{H}}$.

\section{PMR Convergence Protocol and Reference Budgets}
\label{app:pmr_convergence}
 
For settings without a closed-form noise model, surrogate 
reference variance maps were defined by high-replica PMR 
\cite{pmr}. To certify convergence of the reference, the 
following protocol was applied: (1)~compute a gold-standard 
PMR map at $N_{\max}$ replicas; (2)~evaluate voxelwise NRMSE 
and the change in ROI-mean variance ($\Delta$ROI) at each 
intermediate~$N$ relative to the gold map; (3)~declare 
convergence at the smallest~$N$ such that doubling the 
replica count improves NRMSE by ${<}\,0.5\%$ and $\Delta$ROI 
by ${<}\,0.2\%$.
 
\paragraph{Non-Cartesian spiral physical phantom.}
For the prospectively acquired non-Cartesian spiral brain phantom (a physical phantom imaged on a GE 3T scanner; see Sec.~\ref{sec:methods}), the gold map was 
computed at $N_{\max} = 50{,}000$ replicas ($4{,}340$~s). 
The reference used in the main-text experiments was set at 
$N_{\mathrm{ref}} = 30{,}000$ replicas ($2{,}604$~s), at 
which point the NRMSE relative to the gold was 
$\approx 0.4\%$---well below $1\%$---and 
$\Delta$ROI~$\approx 0.001\%$. Increasing the replica count 
from $N_{\mathrm{ref}}$ to $N_{\max}$ produced negligible 
further improvement, confirming that the 
$N_{\mathrm{ref}} = 30{,}000$ reference is adequately 
converged for the $1\%$ NRMSE operating points reported in 
the main text.
 
\paragraph{Nonlinear TV-regularized compressed sensing.}
Because PICO's Jacobian-based approximation and PMR converge to slightly different asymptotic limits in the nonlinear regime (the former to the first-order Jacobian covariance, the latter to the true nonlinear variance), a single shared reference is not appropriate. Method-specific gold maps were instead computed at 
$N_{\max} = 50{,}000$ ($1{,}614$~s) for both the proposed 
Jacobian-based estimator and PMR. Convergence was reached 
at $N_{\mathrm{ref}} = 5{,}000$ 
(NRMSE~$\approx 1.5\%$, 
$\Delta$ROI~$\approx 0.1\%$; wall-clock $161.4$~s). 
Doubling to $N = 10{,}000$ reduced NRMSE to 
$\approx 1.2\%$, an improvement of $<0.5\%$, confirming 
adequate convergence at $N_{\mathrm{ref}}$.
 
These converged reference maps define the baselines against 
which NRMSE is evaluated in the main text. Because each 
method is assessed against its own converged limit, the 
reported NRMSE curves reflect internal convergence rates 
rather than absolute accuracy differences between estimators.

\end{document}